\long\def\skipit#1{} 
\newcommand{\bu}{\bullet}
\newcommand{\lra}{\longrightarrow}
\newcommand{\mdef}[1]{\textit{\textbf{#1}}}
\newcommand{\noi}{\noindent}
\newcommand{\ov}{\overline}
\newcommand{\vsa}{\vskip-12pt}
\newcommand{\vsb}{\vskip-6pt}
\def\firstterminal{p}
\def\secondterminal{q}
\newcommand{\subsecheadfont}[1]{{\fontsize{13pt}{11pt}\selectfont{#1}}}
\newcommand{\subhead}[1]{\smallskip\begin{center}{\bf \subsecheadfont{#1}}\end{center} \smallskip}
\DeclareMathOperator{\pgd}{pgd}
\newcounter{hours}
\newcounter{minutes}
\newcommand{\printtime}{
	\setcounter{hours}{\time/60}%
	\setcounter{minutes}{\time-\value{hours}*60}
	\ifthenelse{\value{hours}<10}{0}{}\thehours:%
	\ifthenelse{\value{minutes}<10}{0}{}\theminutes}
\numberwithin{equation}{section}
\numberwithin{figure}{section}
\numberwithin{table}{section}
\newtheorem{thm}{Theorem}[section]
\newtheorem{cor}[thm]{Corollary}
\newtheorem{prop}[thm]{Proposition}
\newtheorem{algo}{Algorithm}[section]
\theoremstyle{definition}
\newtheorem{J-com}{JG-comment}[section]
\begin{document}

\title[Genus Distributions of cubic series-parallel graphs]
{\large{Genus Distributions \\[4pt] of Cubic Series-Parallel Graphs}}

\author{Jonathan L. Gross}  
\address{Department of Computer Science\\
Columbia University, New York, NY 10027, USA}  
\email{gross@cs.columbia.edu}
\urladdr{http://www.cs.columbia.edu/~gross/}

\author{Michal Kotrb\v{c}\'ik}
\address{Department of Computer Science\\
Masaryk University, Brno 602 00, Czech Republic}
\email{kotrbcik@fi.muni.cz}
\urladdr{http://www.fi.muni.cz/~qkotrbc/}

\author{Timothy Sun}
\address{Department of Computer Science\\
Columbia University, New York, NY 10027, USA}
\email{tim@cs.columbia.edu}
\urladdr{http://www.cs.columbia.edu/~tim/}

\thanks{}

\keywords{graph embedding, genus distribution, series-parallel graphs, bounded treewidth}
\subjclass[2010]{Primary: 05C10}
\date{}

\begin{abstract}  
We derive a quadratic-time algorithm for the genus distribution of any 3-regular, biconnected  series-parallel graph, which we extend to any biconnected series-parallel graph of maximum degree at \hbox{most 3}.  Since the biconnected components of every graph of treewidth 2 are series-parallel graphs, this yields, by use of bar-amalgamation, a quadratic-time algorithm for every graph of treewidth at most 2 and maximum degree at most 3.  

\noi \textbf{Version: \printtime\quad\today\quad} 
\end{abstract} 

\maketitle                   

\section{\large{Introduction}}  

For $i = 0,1,2, \ldots$, let $g_i(G)$ be the number of topologically distinct cellular embeddings of the graph $G$ in the orientable surface $S_i$ of \hbox{genus $i$}.  The \mdef{genus distribution} of the graph $G$ is the sequence of numbers \par\vsa\vsb
\begin{equation} \label{eq:gd}
g_i(G) : i=0,1, \ldots
\end{equation}
By the interpolation principle (see Theorem 3.4.1 of \cite{GrTu87} or Theorem  4.5.3 of \cite{MoTh01}), the set $\{i:g_i(G)>0\}$ is a set of consecutive integers.  The smallest number in this set is the \mdef{minimum genus} of the graph $G$, and the largest is the \mdef{maximum genus} of $G$.

\enlargethispage{-12pt}

The main focus of this paper is the derivation of a quadratic-time algorithm for the genus distribution of any 3-regular, biconnected \textit{series-parallel graph}.   This algorithm is readily extended to a quadratic-time algorithm for the genus distribution of any graph of treewidth at most~2 and maximum degree at most 3.  The simplicity with which this specialized algorithm can be implemented, or applied by hand with the aid of a spreadsheet, distinguishes it from the recently derived quadratic-time algorithm \cite{Gr12} for the genus distribution of any class of graphs of fixed treewidth and bounded degree.  
\medskip


\subhead{Basic results on genus distribution}  

Five fundamental papers \cite{GKP10, Gr11a, PKG10, KPG10, PKG12} of the first author and his co-authors Khan and Poshni have established methods for calculating the genus distribution of a graph that is constructed by various kinds of amalgamation of graphs of known genus distribution.  These methods involve \textit{partitioned genus distributions} and \textit{productions}.  In order to develop an algorithm for  a specific class of graphs, the starting point is to formulate a recursive specification of the graphs in that class, in which the operations used to create larger graphs from smaller graphs are varieties of amalgamation or self-amalgamation.  Then methods similar to those of the five fundamental papers are used to calculate the genus distribution recursively.  This paradigm was used successfully in calculating the genus distributions of 3-regular outerplanar graphs \cite{Gr11b}, of 4-regular outerplanar graphs \cite{PKG11}, of Halin graphs \cite{Gr11c}, and of the $3\times n$-mesh graphs \cite{KPG12}.  We adopt the same paradigm in this paper. 
\medskip

\enlargethispage{12pt}

\subhead{Connections of treewidth to embedding problems}  

Since the introduction of the concept of \textit{treewidth} by Robertson and Seymour, bounding the treewidth has been widely used to obtain polynomial-time algorithms for problems that are otherwise NP-hard.  In particular, deciding whether an arbitrarily selected graph can be embedded in a given surface is NP-complete \cite{Th89};   however, for any class of graphs of bounded treewidth, Kawarabayashi, Mohar, and Reed \cite{KMR08} have derived a linear-time algorithm for calculating the minimum genus.   

Although outerplanar graphs have {treewidth} 2,  and although Halin graphs and $P_3\times P_n$ meshes have {treewidth}  3 (see \cite{Bo98}), decomposition trees have not occurred in the calculation of specific genus distributions in any papers as yet.   Nonetheless, low treewidth plays an implicit role in the recursive specification of the family of graphs in each of those papers.  Similarly, in the present paper, low treewidth plays an implicit role, since it allows for a simple recursive construction of the graphs under consideration.   
\medskip

\eject
\subhead{Terminology}  

In what follows, a \mdef{graph} is taken to be connected and devoid of self-loops, unless something else can be inferred from the immediate context.   Multi-edges are to be expected. We use $V_G$ and $E_G$ to denote the vertex set and edge set of a graph $G$.  A connected graph is \mdef{biconnected} if it has no cutpoints.  
\smallskip

The \mdef{embeddings} in this paper are cellular embeddings in oriented surfaces.  The terminology used here is predominantly consistent with \cite{GrTu87} and \cite{BWGT09}.  See also \cite{MoTh01}, for a slightly different approach.   We abbreviate ``face-boundary walk'' as \mdef{fb-walk}. 
\smallskip

A \mdef{two-terminal series-parallel graph}  is a doubly vertex-rooted graph $(G,\firstterminal,\secondterminal)$,  as per the following recursive definition.
\begin{description}
\item[$B$] The graph $(K_2,\firstterminal,\secondterminal)$ is a two-terminal series-parallel graph, where $\firstterminal$ and $\secondterminal$ are the vertices of $K_2$, called the \mdef{source root} and the \mdef{target root}, respectively. 
\item[$R_1$] \mdef{series operation} $(G,\firstterminal,\secondterminal)\odot_s(G',\firstterminal',\secondterminal')$ Target root $\secondterminal$ of $G$ is merged with source root $\firstterminal'$ of $G'$.  The amalgamated graph $G\odot_sG'$ with roots $\firstterminal$ and $\secondterminal'$, as in Figure \ref{fig:SeriesPar}(a), is a two-terminal series-parallel graph.    
\item[$R_2$] \mdef{parallel operation} $(G,\firstterminal,\secondterminal)\odot_p(G',\firstterminal',\secondterminal')$  The result of merging source root $\firstterminal$ with source root $\firstterminal'$ , and also merging target root $\secondterminal$ with target root $\secondterminal'$, as in Figure \ref{fig:SeriesPar}(b), is a two-terminal series-parallel graph.    
\end{description}

\begin{figure} [ht]  
\centering 
    \includegraphics[width=4.2in]{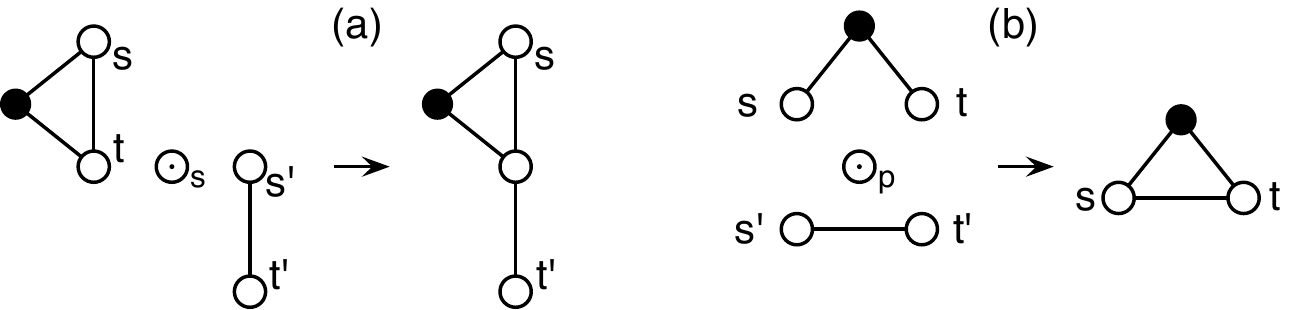}  \vskip-6pt
\caption{Operations on series-parallel graphs.}
\label{fig:SeriesPar}
\end{figure}  
\noi A graph $G$ is a \mdef{series-parallel graph} if there is a choice of terminals $\firstterminal$ and $\secondterminal$ such that $(G,\firstterminal,\secondterminal)$ is a two-terminal series-parallel graph.  Our definition here is consistent with that of \cite{Bo98} and \cite{Ep92}.  A third operation, called a \textit{jackknife operation} is allowed by \cite{BPT09}, and the resulting class of graphs that they call ``series-parallel'' is equivalent to that of \cite{Du65} --- see Remark 8.1 of \cite{BPT09}.   The ``series-parallel graphs" in \cite{Du65} are identified there as the graphs with no embedded ``Wheatstone bridge" (which is Duffin's terminology for a $K_4$ topological minor).  These are precisely the graphs that have no $K_4$-minor (e.g., see Proposition 1.7.2 of \cite{Di06}).  According to Theorem 17 of \cite{Bo98}, the graphs without a $K_4$-minor are exactly the graphs of treewidth at most 2.  It follows that our extended genus-distribution algorithm can be applied to any of them in quadratic time, and that we do not need to further explore the distinctions between the varying definitions of ``series-parallel graphs''. 
\medskip

\subhead{Outline of this paper} 

Section \ref{sec:3regSP} derives a characterization of 3-regular, biconnected series-parallel graphs that facilitates the genus distribution algorithm for that family of graphs.  Section \ref{sec:prods} introduces the concepts of \textit{partitioned genus distributions} and \textit{productions}. The top-level description of an algorithm for the genus distribution of any 3-regular, biconnected series-parallel graph is given in Section \ref{sec:algo}.   \hbox{Sections \ref{sec:dmt} and  \ref{sec:3to1}} derive the productions needed to complete the calculation, as well as their application to calculating the genus distribution of an illustrative example.  Section \ref{sec:3to1} also gives proof that the algorithm runs in quadratic time.  \hbox{Section \ref{sec:TW2deg3}} extends the algorithm to all graphs of treewidth 2 and maximum \hbox{degree 3}.

This paper is almost entirely self-contained, except for some details of the well-established concept of partitioned genus distributions and of the methods  (as in \cite{GKP10} and \cite{PKG10}) for constructing productions (which are quite necessary for the algorithm).  Prior experience with calculating genus distributions of graph amalgamations, especially as in \cite{Gr11b} and \cite{Gr11c}, is likely to be quite helpful.   

\bigskip

\enlargethispage{12pt}

\section{\large{Cubic Biconnected Series-Parallel Graphs} \label{sec:3regSP}} 

The \mdef{dipole} $D_n$ is the graph with two vertices and an $n$-fold multi-edge joining them.  In this section, we prove that every 3-regular, biconnected series-parallel graph can be obtained by iterated application of the following operation to the dipole $D_3$. 
\begin{description}
\item[$\tau$]  Trisect an arbitrary edge $e$ of a graph $G$ and install a new edge in parallel to the ``middle third" of edge $e$.
\end{description}
This operation, which is applicable to an non-empty graph, is called a \mdef{dmt-step} (``dmt'' is an abbreviation of ``double the middle third''), is illustrated by Figure \ref{fig:dmt}.  

\begin{figure} [ht]  
\centering 
    \includegraphics[width=3.3in]{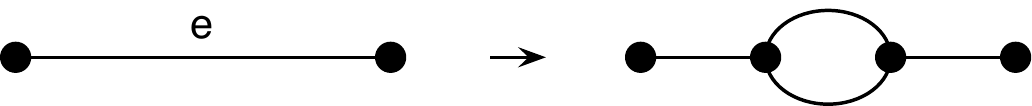}  \vskip-6pt
\caption{A dmt-step: double the middle third.}
\label{fig:dmt}
\end{figure}  

\noi We define a second operation, which is applicable to 3-regular multigraphs, other than the dipole $D_3$. 
\begin{description}
\item [$\tau^{-1}$]   In a 3-regular graph $G$, let $d$ and $e$ be edges that share the same two endpoints, $u$ and $v$, and in which no other edge shares these two endpoints.  Delete edge $d$, and then smooth away vertices $u$ and $v$.
\end{description}
We observe that the operation $\tau^{-1}$ can be used as an inverse to the operation $\tau$.  \medskip

The proof of our characterization of 3-regular, biconnected series-parallel graphs uses the following four propositions. 
\smallskip

\begin{prop} 
A graph $G$ has treewidth at most 2 if and only if every biconnected component of $G$ is a series-parallel graph.  
\label{prop:TW2-SP} 
\end{prop}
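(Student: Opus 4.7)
The plan is to prove the two directions separately, leveraging three ingredients: (i) the equivalence of treewidth $\le 2$ with $K_4$-minor-freeness cited in the preceding subsection, (ii) the fact that the treewidth of a graph equals the maximum treewidth over its biconnected components (its \emph{blocks}), and (iii) a classical structural theorem matching biconnected $K_4$-minor-free graphs with two-terminal series-parallel graphs. The first two are standard and merely invoked; the third is where the real content of the argument lives.

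For the reverse direction, suppose every block $B$ of $G$ is a two-terminal series-parallel graph $(B,p,q)$. I would proceed by structural induction on the series-parallel construction to show that $B$ is $K_4$-minor-free. The base case $(K_2,p,q)$ is immediate. For the inductive step, note that in either a series or parallel composition $(B_1,p_1,q_1)\odot(B_2,p_2,q_2)$ the amalgamation glues along at most two vertices, and the only two vertices shared between $B_1$ and $B_2$ in the composite are terminals. Any putative $K_4$-minor in the composite would yield four branch sets; since $K_4$ is 3-connected and the glued graph has a vertex separator of size at most 2, some branch set must be confined to one side, and the usual case analysis shows this forces a $K_4$-minor already in $B_1$ or $B_2$, contradicting the induction hypothesis. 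Consequently $\text{tw}(B)\le 2$, and since treewidth equals the max over blocks, $\text{tw}(G)\le 2$.

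For the forward direction, suppose $\text{tw}(G)\le 2$. Each block $B$ is a subgraph of $G$, so $\text{tw}(B)\le 2$, and by the cited equivalence $B$ is $K_4$-minor-free. It remains to show that a biconnected $K_4$-minor-free graph admits a two-terminal series-parallel decomposition. I would argue by induction on $|E_B|$. If $B\cong K_2$, pick its two vertices as $p,q$. Otherwise, I use the classical fact (going back to Duffin) that a biconnected $K_4$-minor-free graph with at least three vertices contains either a vertex of degree 2 or a pair of parallel edges. In the first case, smoothing a degree-2 vertex $w$ with neighbors $u,v$ yields a smaller biconnected $K_4$-minor-free graph $B'$; by induction $B'$ has a two-terminal SP structure with some roots $p,q$, and re-subdividing the edge $uv$ corresponds to a series composition with $K_2$, so $(B,p,q)$ is two-terminal SP. In the second case, removing one copy of a parallel edge yields a smaller (still biconnected or reducible to biconnected via the above) $K_4$-minor-free graph, and the removed edge reattaches as a parallel composition with $K_2$.

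The main obstacle is that last structural step: producing a degree-2 vertex or a pair of parallel edges in every nontrivial biconnected $K_4$-minor-free graph, together with verifying that the reduction preserves both biconnectivity and $K_4$-minor-freeness, so that the inductive hypothesis applies. This is a well-known but delicate part of the theory of series-parallel graphs, and handling the case where the reduction momentarily destroys biconnectivity (for instance, when smoothing creates a new parallel pair whose deletion would introduce a cut vertex) requires a careful argument, most cleanly done by considering the ear decomposition of $B$ or by appealing directly to the equivalence established in Proposition~1.7.2 of \cite{Di06} together with Theorem~17 of \cite{Bo98} already invoked in the Introduction.
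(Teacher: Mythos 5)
Your proposal takes a genuinely different route from the paper, which disposes of this proposition in one line by citing Theorem~42 of \cite{Bo98}; you instead reconstruct the equivalence from the $K_4$-minor characterization (Proposition~\ref{prop:zip}) plus the block decomposition. Your reverse direction is essentially complete: a series or parallel composition glues along at most two vertices, $K_4$ is 3-connected, so a $K_4$-minor cannot straddle the 2-separator, and induction over the SP construction gives $K_4$-minor-freeness of each block; combined with the standard fact that treewidth is the maximum over blocks, this works. One small wrinkle there: when you push a putative $K_4$-minor of the composite into one side, what you actually obtain is a $K_4$-minor of $B_1$ (or $B_2$) \emph{with an edge added between its two terminals}, since the other side may supply a terminal-to-terminal path. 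The clean fix is to strengthen the inductive hypothesis to ``$(B,p,q)$ with the edge $pq$ added is $K_4$-minor-free,'' which is preserved because adding $pq$ is itself a parallel composition with $K_2$.

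The forward direction is where your argument is not yet a proof: you invoke, but do not establish, the structural claim that every biconnected $K_4$-minor-free graph on at least three vertices has a degree-2 vertex or a parallel pair, and you explicitly defer the verification that the reductions preserve biconnectivity, ultimately falling back on citations (and Proposition~1.7.2 of \cite{Di06} is not the right reference for that step --- it concerns the equivalence of $K_4$-minors and $K_4$-subdivisions, not the SP decomposition). The gap is real but repairable with material already in this paper: the missing lemma is exactly Dirac's theorem (Proposition~\ref{prop:Dirac}), since a biconnected $K_4$-minor-free graph on at least three vertices is either non-simple (giving a parallel pair) or simple with some vertex of degree exactly~$2$ (degree at least $2$ being forced by biconnectivity, degree at least $3$ everywhere being excluded by Dirac). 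The preservation claims are then routine: in a $2$-connected multigraph on at least three vertices the two neighbors of a degree-$2$ vertex are distinct, and both suppressing such a vertex and deleting one edge of a parallel pair keep the graph $2$-connected. With those observations your induction closes; what your approach buys is a self-contained proof in place of an external citation, at the cost of roughly a page of case analysis that the paper avoids.
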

\begin{proof} \vsb
This is Theorem 42 of \cite{Bo98}. 
\end{proof}
\smallskip

\begin{prop}
A graph $G$ has treewidth at most 2 if and only if contains no $K_4$-minor.   
\label{prop:zip} 
\end{prop}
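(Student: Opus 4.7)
The plan is to establish both implications using minor-monotonicity of treewidth together with Proposition \ref{prop:TW2-SP}. For the forward direction, I would first observe that $K_4$ has treewidth exactly $3$: any tree decomposition of a complete graph $K_n$ must contain a bag that includes all $n$ vertices, so $K_4$ requires a bag of size $4$. Combined with the standard fact that treewidth is minor-monotone --- if $H$ is a minor of $G$, then the treewidth of $H$ is at most that of $G$ --- this immediately gives that a graph of treewidth at most $2$ cannot contain $K_4$ as a minor.

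For the reverse direction, I would invoke Proposition \ref{prop:TW2-SP}, which reduces the problem to biconnected components. Since excluding $K_4$ as a minor is inherited by subgraphs, each biconnected component of a $K_4$-minor-free graph is itself $K_4$-minor-free. Thus it suffices to prove: every biconnected graph with no $K_4$-minor is series-parallel. I would proceed by induction on the number of edges, with the base case $G = K_2$ immediate from the recursive definition.

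For the inductive step, the central structural lemma is that any biconnected graph $G$ with more than one edge and no $K_4$-minor contains either a pair of parallel edges or a vertex of degree exactly $2$. Granted this, one may either delete one of the two parallel edges (inverting a parallel operation) or smooth the degree-$2$ vertex (inverting a series operation); in both cases a routine check shows that the resulting graph $G'$ remains biconnected, and it clearly remains $K_4$-minor-free, so $G'$ is series-parallel by the inductive hypothesis. Recomposing $G$ via the corresponding series or parallel operation then exhibits $G$ as series-parallel.

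The main obstacle is the structural lemma. I would prove it by contradiction: assume $G$ is biconnected with minimum degree at least $3$ and no parallel edges, and construct a $K_4$-minor. A natural approach uses an open ear decomposition $C, P_1, P_2, \ldots$ of the $2$-connected graph $G$; the conditions of minimum degree $\geq 3$ and no parallel edges force each ear to be internally non-trivial and force the attachment points to spread out enough to produce four branch vertices joined pairwise by internally disjoint paths, which is precisely a $K_4$-topological minor and hence a $K_4$-minor. This combinatorial step is the heart of the argument; alternatively one could appeal to Duffin's classical characterization, but a direct derivation is more in keeping with the paper's self-contained style.
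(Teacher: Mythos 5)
Your proof is essentially correct, but it takes a genuinely different route from the paper, whose entire proof of this proposition is a one-line citation of Theorem 17 of \cite{Bo98}. Your forward direction (a tree decomposition of $K_4$ needs a bag containing all four vertices, plus minor-monotonicity of treewidth) is standard and fine. In the reverse direction, observe that your ``central structural lemma'' is precisely the contrapositive of Dirac's theorem, which the paper already records as Proposition \ref{prop:Dirac} with a citation to \cite{Di52}; you could invoke that directly instead of rebuilding it from an ear decomposition (there is no circularity in also using Proposition \ref{prop:TW2-SP}, since the paper imports it as an independent result, Theorem 42 of \cite{Bo98}). The one step you should not pass over with ``recomposing $G$ \ldots exhibits $G$ as series-parallel'' is the recomposition itself: the inductive hypothesis only gives that $G'$ is two-terminal series-parallel for \emph{some} choice of terminals, so you still need the lemma that subdividing an arbitrary edge, or doubling an arbitrary edge, of a two-terminal series-parallel graph again yields one (with the same terminals). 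This is true and is proved by a routine induction on the recursive construction of $(G',\firstterminal,\secondterminal)$ --- the affected edge lies in one amalgamand at each composition step, and at the base case a subdivided or doubled $K_2$ is $K_2\odot_s K_2$ or $K_2\odot_p K_2$; alternatively, Lemma 9 of \cite{Ep92}, which the paper already uses in Section \ref{sec:algo}, lets you take the endpoints of the relevant edge as terminals. What your approach buys is a self-contained proof in the spirit of Duffin \cite{Du65} and Proposition 1.7.2 of \cite{Di06}, and it makes visible why Dirac's theorem appears in the paper at all; what the paper's citation buys is brevity, since the equivalence is classical and not where the paper's novelty lies.
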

\begin{proof} \vsb
This follows immediately from Theorem 17 of \cite{Bo98}. 
\end{proof}
\smallskip

\begin{prop}  \label{prop:inherit} 
Let $G$ be a 3-regular, biconnected series-parallel graph, and let $G'$ be a graph with at least two vertices, obtained by applying operation $\tau^{-1}$ to edges $d$ and $e$  \hbox{of $G$} , with shared endpoints $u$ and $v$.  Then $G'$ is 3-regular, biconnected, and series-parallel. 
\end{prop}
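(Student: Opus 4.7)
The plan is to verify the three claims about $G'$ in turn: 3-regularity, biconnectedness, and the series-parallel property. Let $u'$ denote the third neighbor of $u$ in $G$ (i.e., along the unique edge at $u$ other than $d$ and $e$), and define $v'$ analogously at $v$. By hypothesis $d$ and $e$ are the only edges between $u$ and $v$, so $u' \ne v$ and $v' \ne u$; moreover, since $G'$ is assumed to be a (loop-free) graph, the smoothings cannot merge into a self-loop, forcing $u' \ne v'$. The 3-regularity of $G'$ is then immediate, since deleting $d$ lowers only the degrees of $u$ and $v$ (from $3$ to $2$), and smoothing a degree-$2$ vertex simply replaces a length-$2$ path by a single edge, leaving every other vertex's degree unchanged.

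For biconnectedness, first I would argue that $H := G - d$ is biconnected. Suppose for contradiction that $w$ is a cutpoint of $H$. If $w \in \{u,v\}$, then $G - w = H - w$ (since $d$ is incident to $w$), so $w$ is already a cutpoint of $G$, contradicting the biconnectedness of $G$. Otherwise, the parallel edge $e$ survives in $H - w$, so $u$ and $v$ remain in a common component; reinserting $d$ cannot then reconnect any other separated components of $H - w$, so $G - w$ is still disconnected, again a contradiction. Next, smoothing the degree-$2$ vertex $u$ preserves biconnectedness: any cutpoint $w$ of the smoothed graph $H_1$ must differ from $u$, and under the correspondence between $H - w$ and $H_1 - w$ (which differ only by subdividing or desubdividing one edge), $w$ lifts to a cutpoint of $H$. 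Applying this observation once more, smoothing $v$ in $H_1$ yields the biconnected graph $G'$.

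For the series-parallel property, observe that smoothing a degree-$2$ vertex is the same as contracting one of its two incident edges, so $G'$ is obtained from $G$ by one edge deletion and two edge contractions; in particular, $G'$ is a minor of $G$. By Propositions~\ref{prop:TW2-SP} and~\ref{prop:zip}, the biconnected series-parallel graph $G$ has no $K_4$-minor; hence neither does $G'$, so $G'$ has treewidth at most $2$, and since $G'$ is its own unique biconnected component, the same two propositions identify $G'$ as series-parallel. The main obstacle is the biconnectedness argument, where the role of the parallel edge $e$ is essential for ruling out a new cutpoint after deleting $d$, and the smoothing-preserves-cutpoints lemma must be applied carefully to avoid an unintended self-loop at the merged vertex $u' \ne v'$.
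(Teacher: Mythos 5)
Your proof is correct, and the 3-regularity and series-parallel portions run essentially parallel to the paper's (the paper likewise deduces 3-regularity from the local degree count and gets the series-parallel property from Propositions~\ref{prop:zip} and~\ref{prop:TW2-SP} via the absence of a $K_4$-minor; you are merely more explicit that $G'$ is a minor of $G$ and that the final application of Proposition~\ref{prop:TW2-SP} uses the biconnectedness of $G'$). Where you genuinely diverge is the biconnectedness step. The paper argues globally via Menger's theorem: for any $x,y\in V_{G'}$ it takes two internally disjoint $x$--$y$ paths in $G$ and observes that, by 3-regularity, a path through $u$ or $v$ must use $d$ or $e$ and hence passes through both $u$ and $v$, so at most one of the two paths meets $\{u,v\}$ and their images in $G'$ remain internally disjoint. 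You instead argue locally via cutpoints: first that $G-d$ has no cutpoint (the parallel edge $e$ keeps $u$ and $v$ together after deleting any $w\notin\{u,v\}$, and for $w\in\{u,v\}$ the deletion of $w$ already removes $d$), and then that smoothing a 2-valent vertex cannot create a cutpoint. Your route is more elementary in that it avoids Menger, but it shifts the burden onto the smoothing lemma, which needs the small case analysis you allude to (when the deleted vertex $w$ is a neighbor of the smoothed vertex, $H-w$ and $H_1-w$ differ by a pendant vertex rather than a subdivision, though connectivity is unaffected) and onto the verification that no self-loop arises; for the latter you appeal to the hypothesis that $G'$ is a loop-free graph, whereas one can also derive $u'\ne v'$ directly from the biconnectedness of $G$ (if $u'=v'$ then $u'$ would be a cutpoint separating $\{u,v\}$ from the rest of $G$). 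Both arguments are sound; the paper's is shorter, yours makes the role of the surviving parallel edge $e$ more transparent.
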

\begin{proof}\vsb
Since $G$ is 3-regular, all of its vertices are 3-valent.  The operation $\tau^{-1}$ eliminates two vertices of $G$ without changing the valences of the remaining vertices.  Thus, the graph $G'$ is 3-regular.  

Let $x$ and $y$ be any two vertices of $G'$.  Since $G$ is biconnected, there is a pair of internally disjoint paths in $G$ joining $x$ and $y$, by Menger's theorem.   If one of these paths contains either of the vertices $u$ or $v$, then it also contains one of the edges $d$ or $e$, since $G$ is 3-regular, and the other path contains neither vertex $u$ nor vertex $v$.  Thus the images of the two internally disjoint path in $G$ are two internally disjoint paths in $G'$. Hence, the graph $G'$ is biconnected.

Since the graph $G$ is series-parallel, its treewidth is at most 2, by Proposition \ref{prop:TW2-SP}.  It follows from Proposition \ref{prop:zip} that $G$ has no $K_4$-minor.  Accordingly, the graph $G'$ has no $K_4$ minor.  Therefore, by Proposition \ref{prop:zip}, the graph $G'$ has treewidth at most 2.  We conclude from Proposition \ref{prop:TW2-SP} that the graph $G'$ is series-parallel. 
\end{proof}
\smallskip

\begin{prop}[Dirac's theorem] \label{prop:Dirac} 
Let $G$ be a biconnected simple graph of minimum degree 3.  Then $G$ contains a subgraph that is homeomorphic to the complete graph $K_4$. 
\end{prop}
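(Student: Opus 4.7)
The plan is to prove this by an extremal argument on a \emph{theta subgraph} of $G$, where a theta subgraph is a pair of vertices joined by three internally disjoint paths. I would first produce such a subgraph in $G$, then show that the hypothesis $\delta(G)\geq 3$ forces a minimum-edge theta to extend to a subdivision of $K_4$.

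I would first exhibit a theta subgraph. Since $\delta(G)\geq 2$, $G$ contains a cycle $C$. Because $\delta(G)\geq 3$, some vertex $v\in V(C)$ has an edge $vw$ not in $E(C)$; simplicity of $G$ forces $w$ to be distinct from the two neighbors of $v$ on $C$. Biconnectedness makes $G-v$ connected, so there is a shortest path in $G-v$ from $w$ to $V(C)\smallsetminus\{v\}$, and its interior is disjoint from $V(C)$. Combining the edge $vw$ with this path yields a $vy$-path for some $y\in V(C)\smallsetminus\{v\}$, internally disjoint from $C$; together with the two arcs of $C$ between $v$ and $y$, this is a theta subgraph.

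Now fix a theta subgraph $\Theta$ of $G$ with the minimum number of edges, and write $\Theta$ as three internally disjoint paths $P_1,P_2,P_3$ joining branch vertices $x$ and $y$. Because $G$ is simple, at most one of the $P_i$ is a single edge, so at least two of them contain an internal vertex. Pick an internal vertex $u$ on such a $P_i$. Since $\deg_G(u)\geq 3$ and $u$ has exactly two neighbors on $P_i$, there is an edge $uw$ with $w$ different from both neighbors of $u$ on $P_i$. The core of the proof is a case split on the location of $w$. If $w$ is an internal vertex of a different path $P_j$, and $k$ denotes the remaining index, then the two subpaths of $P_i$ split at $u$, the two subpaths of $P_j$ split at $w$, the path $P_k$, and the edge $uw$ form six internally disjoint paths pairwise connecting $x,y,u,w$, exhibiting a $K_4$-subdivision. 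If $w\in\{x,y\}$ or $w$ lies on $P_i$ itself, the edge $uw$ allows one to replace $P_i$ by a strictly shorter $xy$-path routed through $uw$, yielding a theta subgraph with strictly fewer edges and contradicting the choice of $\Theta$. Finally, if $w\notin V(\Theta)$, biconnectedness supplies a path from $w$ to $V(\Theta)$ in $G-u$; its endpoint $w'$ on $\Theta$ together with $u$ and the concatenated path reduces to one of the preceding sub-cases.

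The main obstacle I expect is the minimality reduction: for each position of $w$ in $\Theta$ (equal to $x$, equal to $y$, or internal to $P_i$), one must verify that the replacement path through $uw$ is strictly shorter than the arc of $P_i$ it bypasses and does not introduce a parallel pair of edges --- the latter is ruled out by the simplicity of $G$ combined with the condition that $w$ differs from both neighbors of $u$ on $P_i$. Once these verifications are done, the ``good'' case where $w$ lies interior to a different $P_j$ yields the desired $K_4$-subdivision by a straightforward check that the six listed paths are pairwise internally disjoint.
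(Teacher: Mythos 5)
The paper offers no internal proof of this proposition --- it simply cites Dirac's 1952 paper --- so your attempt must stand on its own, and it contains a genuine gap in its final case. Everything through Case B is sound: the construction of a theta subgraph, the choice of a minimum-edge theta $\Theta = P_1\cup P_2\cup P_3$, the existence of an internal vertex $u$ of some $P_i$ with a neighbor $w$ distinct from its two $P_i$-neighbors, the $K_4$-subdivision when $w$ is internal to another branch, and the strict shortening when $w\in V(\Theta)$ lies on $P_i$ or is a branch vertex (there the replacement path is the single edge $uw$, while the bypassed arc has length at least $2$ precisely because $w$ is not a $P_i$-neighbor of $u$). The problem is Case C, $w\notin V(\Theta)$. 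There the object playing the role of $uw$ is the concatenated path $R$ from $u$ through $w$ to its first return $w'$ on $\Theta$, and $R$ has length at least $2$ and may be arbitrarily long. If $w'$ is an internal vertex of a different branch $P_j$, you do get a $K_4$-subdivision; but if $w'$ equals $x$ or $y$ or lies on $P_i$ --- which you cannot exclude --- rerouting $P_i$ through $R$ need not decrease the edge count (the bypassed arc of $P_i$ can even have length $1$, since $w'$, unlike $w$, is not constrained to avoid the $P_i$-neighbors of $u$). So neither a $K_4$-subdivision nor a contradiction with the minimality of $\Theta$ is produced, and the case analysis does not close. Your own ``main obstacle'' paragraph verifies strictness only for positions of $w$ inside $\Theta$, which is exactly where the verification is easy; the case $w\notin V(\Theta)$ is where it fails.

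This is not a cosmetic issue: the configuration in which every internal vertex of $\Theta$ sends its third edge outside $\Theta$ and every ear returns to the same branch or to a branch vertex is the genuinely hard case of Dirac's theorem, and closing it requires a further idea --- for instance a different extremal choice than ``fewest edges,'' or an argument that among the ears leaving $\Theta$ some pair must interleave so as to force either a strictly smaller theta or two new branch vertices on distinct branches. As written, your argument establishes only that a minimum-edge theta has all of its extra adjacencies leaving $V(\Theta)$ or landing on other branches, which is weaker than the stated theorem.
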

\begin{proof}\vsb
This is a theorem of \cite{Di52}.
\end{proof}
\smallskip

\begin{thm}
Let $G$ be a biconnected 3-regular series-parallel graph. Then there exist vertices $\firstterminal,\secondterminal\in V_G$ such that the two-terminal series-parallel graph $(G, \firstterminal, \secondterminal)$ is derivable from $(D_3, \firstterminal, \secondterminal)$ by a sequence of applications of the operation $\tau$. 
\end{thm}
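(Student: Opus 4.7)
\medskip

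My plan is to prove the theorem by induction on $|V_G|$. The base case is $G = D_3$, which has the fewest vertices possible for a $3$-regular graph and is trivially derivable from itself by an empty sequence of $\tau$-steps. For the inductive step, the strategy is to exhibit a pair of parallel edges in $G$, apply $\tau^{-1}$ to reduce to a smaller graph $G'$ in the same class, invoke the induction hypothesis on $G'$ to obtain a $\tau$-derivation of $G'$ from $D_3$, and finally append one more $\tau$-step (the inverse of the $\tau^{-1}$ applied) to produce the desired derivation of $G$.

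The first substantive step is to locate a suitable pair of parallel edges when $G \neq D_3$. Suppose toward contradiction that $G$ has no multi-edge, i.e.\ $G$ is a simple graph. Since $G$ is biconnected and $3$-regular, Proposition~\ref{prop:Dirac} (Dirac's theorem) guarantees that $G$ contains a subdivision of $K_4$, and therefore a $K_4$-minor. However, $G$ is series-parallel and hence (by Propositions~\ref{prop:TW2-SP} and \ref{prop:zip}) has no $K_4$-minor, a contradiction. So $G$ has some pair of vertices joined by two or more parallel edges. Since $G$ is $3$-regular, three parallel edges between $u$ and $v$ would saturate their valences and force $G = D_3$; thus there exist vertices $u,v$ joined by exactly two edges $d, e$.

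The next step is to verify that $\tau^{-1}$ applied to $d$ and $e$ yields a valid graph (at least two vertices, no self-loops), so that Proposition~\ref{prop:inherit} applies. Let $u'$ be the unique third neighbor of $u$ and $v'$ the unique third neighbor of $v$. Since $G \ne D_3$ and $G$ is $3$-regular, $|V_G| \geq 4$, so the resulting graph $G'$ will have $|V_G| - 2 \geq 2$ vertices. The one worry is that smoothing might create a self-loop, which happens precisely when $u' = v'$. But if $u' = v' = w$, then every edge incident to $u$ or $v$ has its other endpoint in $\{u,v,w\}$, so removing $w$ separates $\{u,v\}$ from $V_G \setminus \{u,v,w\}$ (which is nonempty since $|V_G| \geq 4$), contradicting biconnectivity. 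Hence $u' \neq v'$ and $\tau^{-1}$ produces a loopless graph $G'$ on at least two vertices.

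Now Proposition~\ref{prop:inherit} says $G'$ is itself a $3$-regular, biconnected, series-parallel graph, so by the induction hypothesis there is a sequence of $\tau$-steps carrying $D_3$ to $G'$. Since $\tau$ inverts $\tau^{-1}$ on the two parallel edges we removed, appending the corresponding $\tau$-step produces a $\tau$-derivation of $G$ from $D_3$; the choice of terminals $\firstterminal, \secondterminal$ can be taken to be any two vertices of $D_3$, which persist unchanged under every $\tau$-step. I expect the main obstacle to be the self-loop case in step three — it is easy to overlook but is the only place where biconnectivity (as opposed to mere connectivity) is essential, since in a graph where $u'=v'$ one does obtain a valid $3$-regular multigraph, just not a biconnected one.
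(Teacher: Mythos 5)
Your proof is correct and follows essentially the same route as the paper's: both arguments use Dirac's theorem together with the $K_4$-minor-free characterization of series-parallel graphs to locate a doubled edge, reduce via $\tau^{-1}$ using Proposition~\ref{prop:inherit}, and then reverse the reduction sequence (your explicit induction versus the paper's ``take a smallest graph $H$ reachable by $\tau^{-1}$-steps'' is only a difference of packaging). Your verification that $\tau^{-1}$ cannot create a self-loop (the $u'=v'$ case) is a detail the paper's proof passes over silently, and it is a worthwhile addition.
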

\begin{proof} \vsb
Let $H$ be a smallest graph obtainable by iterative application of operation $\tau^{-1}$ to the graph $G$, that is, a graph such that no pair of vertices is joined by exactly two edges.  By Proposition \ref{prop:inherit}, the graph $H$ is 3-regular, biconnected, and series-parallel. We observe that $H$ cannot be simple, lest it contain, by Dirac's theorem, a homeomorphic copy \hbox{of $K_4$}, a contradiction, in view of Propositions \ref{prop:zip} and \ref{prop:TW2-SP}.  Accordingly, there is a pair of vertices $\firstterminal,\secondterminal\in V_H$ with at least two edges joining them.  Since there cannot be exactly two edges joining $\firstterminal$ and $\secondterminal$, by the minimality of the graph $H$, and since $H$ is 3-regular, it follows that $H\cong D_3$.  Reversing the sequence of $\tau^{-1}$-operations, we obtain a derivation of $(G,\firstterminal,\secondterminal)$ from $(D_3,\firstterminal,\secondterminal)$ by iterative application of the operation $\tau$. 
\end{proof}
\smallskip

We define a \mdef{dmt-string} to be a graph obtained by iterative application of  dmt-steps to the graph $K_2$.  We observe that each dmt-string has two univalent vertices and that all other vertices are trivalent. 
\medskip

\begin{cor}
Let $(G,\firstterminal,\secondterminal)$ be a 3-regular, biconnected two-terminal series-parallel graph.  Then $(G,\firstterminal,\secondterminal)$ can be represented by a set of three dmt-strings, each with a univalent $\firstterminal$-vertex and a univalent $\secondterminal$-vertex, from which $(G,\firstterminal,\secondterminal)$ is formed by two parallel operations. 
\end{cor}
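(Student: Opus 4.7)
The plan is to apply the theorem just proved to obtain a derivation of $(G,\firstterminal,\secondterminal)$ from $(D_3,\firstterminal,\secondterminal)$ by a sequence of dmt-steps $\tau_1,\ldots,\tau_k$, and then to organize the edges produced along the derivation into three groups, one for each of the original three parallel edges of $D_3$. Each group, together with the two terminals, should form a dmt-string with univalent $\firstterminal$ and $\secondterminal$; the three dmt-strings should then reassemble into $(G,\firstterminal,\secondterminal)$ via two parallel operations.

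First, I would formalize a notion of \emph{descent}. Label the three parallel edges of $D_3$ as $e_1,e_2,e_3$. When a dmt-step $\tau_j$ trisects an edge $f$ and adds a parallel middle edge, the four edges introduced are declared descendants of whichever $e_i$ is the ancestor of $f$. A routine induction on $j$ shows that at every stage of the derivation the descendants of $e_i$ induce a connected subgraph $H_i$ of the current graph, that $H_i$ and $H_{i'}$ share no vertices other than $\firstterminal$ and $\secondterminal$ when $i\neq i'$, and that $H_1\cup H_2\cup H_3$ is the entire current graph. The key point is that each dmt-step is local to a single edge and therefore affects only one of the three $H_i$'s.

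Second, I would prove by induction on the length of the derivation that each $(H_i,\firstterminal,\secondterminal)$ is itself a dmt-string. The base case is the single edge $e_i\cong K_2$. The inductive step notes that a dmt-step applied to an edge $f$ of $H_i$ is, by definition, a dmt-step internal to $H_i$ viewed as a graph being built from $K_2$, so $H_i$ is obtained from $K_2$ by iterated dmt-steps. Moreover, $\firstterminal$ and $\secondterminal$ remain univalent in each $H_i$ throughout: a dmt-step on an internal edge of $H_i$ does not touch $\firstterminal$ or $\secondterminal$, and there is no dmt-step on an edge incident to $\firstterminal$ or $\secondterminal$ in $H_i$ that could raise its valence there, because 3-regularity of $G$ together with the disjointness of the $H_i$ away from $\{\firstterminal,\secondterminal\}$ forces exactly one edge of each $H_i$ to touch each terminal.

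Finally, identifying the three copies of $\firstterminal$ across $H_1,H_2,H_3$, and similarly the three copies of $\secondterminal$, reconstructs $G$, and this double identification is exactly the composite
\[
\bigl((H_1,\firstterminal,\secondterminal)\odot_p(H_2,\firstterminal,\secondterminal)\bigr)\odot_p(H_3,\firstterminal,\secondterminal),
\]
the required pair of parallel operations. The only real obstacle is the bookkeeping in the descent argument: verifying that the $H_i$ remain pairwise disjoint away from $\{\firstterminal,\secondterminal\}$ and that each terminal stays univalent within every $H_i$. Both properties follow by a straightforward induction on the number of dmt-steps, since the dmt operation adds only interior vertices to the edge it acts on and never identifies vertices across distinct $H_i$'s.
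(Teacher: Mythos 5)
Your argument is correct and fills in exactly the reasoning the paper leaves implicit (the corollary is stated there without proof as an immediate consequence of the theorem): the three edges of $D_3$ seed the three dmt-strings via your descent bookkeeping, and the identifications at $\firstterminal$ and $\secondterminal$ are the two parallel operations. One small simplification is available: a dmt-step never changes the valence of any pre-existing vertex, so each terminal stays univalent in each $H_i$ automatically, without needing to invoke 3-regularity of $G$ or the disjointness of the $H_i$.
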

\bigskip

\enlargethispage{12pt}

\section{\large{Partials and Productions} \label{sec:prods}}  

When calculating the genus distribution of a family of graphs, we commonly use a finer partition of the embeddings during the intermediate steps.  For computational purposes, we need to isolate subsets of embeddings upon which the surgical operations used in the recursive construction of that family have the same effect.  Whereas the genus distribution of a graph is an inventory according only (as per \eqref{eq:gd}) to the genus of the embedding surface, a  \mdef{partitioned genus distribution} refines the genus distribution of a rooted graph, according to the incidence of fb-walks on the roots, which is the critical factor in the behavior of a surgical operation on an embedding. 
In other words, a partitioned genus distribution is a partition of all embeddings of the graph with a given genus into into several types, which allows us to keep under control the structure of the faces incident with the roots before, respectively after the amalgamation.
The cells of the finer partition are called \mdef{partials}.  In this context, we sometimes abbreviate genus distribution as \mdef{gd} and partitioned genus distribution as \mdef{pgd}. 
\smallskip

Calculating the genus distributions of 3-regular series-parallel graphs involves amalgamating subgraphs at pairs of terminals, such that the sum of the degrees of an amalgamated pair  of vertices is at most 3.  Thus, the possible degrees of the terminals prior to the final operation are 1 and 2.  When both terminals are univalent, the genus distribution of $(G,\firstterminal,\secondterminal)$ is partitioned into the following partials: 
\begin{eqnarray*}
uu_i^\bu(G,\firstterminal,\secondterminal) &=& \hbox{the number of embeddings $G\to S_i$ such that} \\ [-3pt]
&& \hbox{terminals $\firstterminal$ and $\secondterminal$ do not occur on the same fb-walk;} \\
uu_i'(G,\firstterminal,\secondterminal) &=& \hbox{the number of embeddings $G\to S_i$ such that} \\ [-3pt]
&& \hbox{terminals $\firstterminal$ and $\secondterminal$ occur on the same fb-walk.}  
\end{eqnarray*}
The letter $u$ in the name of the partial is a mnemonic for \textit{univalent}.  For every $i=0, 1, 2, \ldots$, the set of all embeddings of $(G,\firstterminal,\secondterminal)$ with genus $i$ gives a partitioned genus distribution given by the formula
$$g_i(G,\firstterminal,\secondterminal) ~=~ uu_i^\bu(G,\firstterminal,\secondterminal) + uu_i'(G,\firstterminal,\secondterminal)$$

When terminal $\firstterminal$ is univalent and terminal $\secondterminal$ is bivalent, the letters $d$ or $s$ in the name of the partial mean, respectively, that $\secondterminal$ occurs on two \textit{different} fb-walks or that $\secondterminal$ occurs twice on the \textit{same} fb-walk. There are four partials:
\begin{eqnarray*}
ud_i^\bu(G,\firstterminal,\secondterminal) &=& \hbox{the number of embeddings $G\to S_i$ such that} \\ [-3pt]
&& \hbox{terminal $\firstterminal$ occurs on neither fb-walk incident at $\secondterminal$;} \\
ud_i'(G,\firstterminal,\secondterminal) &=& \hbox{the number of embeddings $G\to S_i$ such that} \\ [-3pt]
&& \hbox{terminal $\firstterminal$ occurs on one fb-walk incident at $\secondterminal$;}  \\
us_i^\bu(G,\firstterminal,\secondterminal) &=& \hbox{the number of embeddings $G\to S_i$ such that} \\ [-3pt]
&& \hbox{terminal $\firstterminal$ does not occur on the fb-walk incident at $\secondterminal$;} \\
us_i'(G,\firstterminal,\secondterminal) &=& \hbox{the number of embeddings $G\to S_i$ such that} \\ [-3pt]
&& \hbox{terminal $\firstterminal$ occurs on the fb-walk incident at $\secondterminal$.}   
\end{eqnarray*}

As before,  for every $i=0, 1, 2, \ldots$, the set of all embeddings of $(G,\firstterminal,\secondterminal)$ with genus $i$ has a partitioned genus distribution according to the formula
$$g_i(G,\firstterminal,\secondterminal) ~=~ ud_i^\bu(G,\firstterminal,\secondterminal) + ud_i'(G,\firstterminal,\secondterminal)+ us_i^\bu(G,\firstterminal,\secondterminal) + us_i'(G,\firstterminal,\secondterminal)$$


\noi Similarly, when $\firstterminal$ is bivalent and $\secondterminal$ is univalent, there are four partials:
\begin{eqnarray*}
du_i^\bu(G,\firstterminal,\secondterminal) &=& \hbox{the number of embeddings $G\to S_i$ such that} \\ [-3pt]
&& \hbox{terminal $\secondterminal$ occurs on neither fb-walk incident at $\firstterminal$;} \\
du_i'(G,\firstterminal,\secondterminal) &=& \hbox{the number of embeddings $G\to S_i$ such that} \\ [-3pt]
&& \hbox{terminal $\secondterminal$ occurs on one fb-walk incident at $\firstterminal$;}  \\ 
su_i^\bu(G,\firstterminal,\secondterminal) &=& \hbox{the number of embeddings $G\to S_i$ such that} \\ [-3pt]
&& \hbox{terminal $\secondterminal$ does not occur on the fb-walk incident at $\firstterminal$;} \\
su_i'(G,\firstterminal,\secondterminal) &=& \hbox{the number of embeddings $G\to S_i$ such that} \\ [-3pt]
&& \hbox{terminal $\secondterminal$ occurs on the fb-walk incident at $\firstterminal$.}  
\end{eqnarray*}
\smallskip

\enlargethispage{12pt}

Suppose that $p^1, p^2, \ldots, p^s$ is a set of partials for a genus distribution.  A \mdef{production} for a given surgical operation that transforms a graph embedding $X\to S_i$ (or a tuple of graph embeddings) into a set of graph embeddings of the graph $Y$ is an algebraic rule of this form:  
\begin{equation}
p^j_i(X) \lra c_1 p^1_{f^j_1(i)}(Y) + \cdots + c_t p^s_{f^j_s(i)}(Y)
\end{equation}
The left side is called the \mdef{antecedent}, and the right side is called the \mdef{consequent}. The meaning is that the operation transforms a single embedding of graph $X$ of type $p^j$ on the orientable surface $S_i$ of genus~$i$ into a set of embeddings of the graph  $Y$, of which $c_k$ are of type $p^k$  on the surface $S_{f^j_k(i)}$, for each $i$, $j$, and $k$.  A drawing is usually used as an aid in deriving the production and in proving its correctness. The names of the graphs and their roots can be suppressed when there is in context no ambiguity.  Thus, we may write 
\begin{equation}
p^j_i \lra c_1 p^1_{f^j_1(i)} + \cdots + c_t p^s_{f^j_s(i)}
\end{equation}

In general, when there are $n$ partials, a surgical operation on two graphs is represented by $n^2$ productions for the partials.  It is clear to someone familiar with the use of partials and productions that it is possible to represent the parallel operation and the series operation by respective lists of productions, one for each ordered pair of partials. This would also lead to an algorithm that requires quadratic-time.  Since our present objective is an algorithm that can be described concisely and calculated by hand for small graphs, we intend here to construct shorter lists of productions.  
\bigskip

\enlargethispage{18pt}

\section{\large{An Algorithm} \label{sec:algo}} 

Our Algorithm \ref{algo:biconn} for calculating the genus distribution of any cubic, biconnected, two-terminal series-parallel graph $G$ has five steps.   
\vskip0.5cm

\noindent
\fbox{\parbox{4.9in}{  
\begin{algo}
\textbf{Genus distribution algorithm for a cubic, biconnected, series-parallel graph $G$.}   \label{algo:biconn}
\end{algo}
\noi Input: A 3-regular biconnected series-parallel graph $G$. \par
\noi Output:  The genus distribution of the graph $G$.  \smallskip
\begin{enumerate}
\item Choose the endpoints $\firstterminal$ and $\secondterminal$ of an edge as the terminals. \label{algo:Epp}
\item Determine the three dmt-strings $N^1, N^2, N^3$ corresponding to the graph $(G,\firstterminal,\secondterminal)$. \label{algo:dmt-strings}
\item Calculate the pgd of each of the three dmt-strings $N^1, N^2, N^3$.\label{algo:pgd-dmt}
\item Calculate the pgd of the graph $N^1\odot_p N^2$. \label{algo:pgd-2dmt}
\item Calculate the gd of the graph $G\,=\,(N^1\odot_p N^2)\odot_p N^3$. \label{algo:pgd-3dmt}
\end{enumerate}
}} \vskip0.5cm

\enlargethispage{18pt}

\noi Step \eqref{algo:Epp}.  Lemma 9 of \cite{Ep92} shows that a biconnected series-parallel graph is two-terminal series-parallel, for any pair of terminals $\firstterminal$ and $\secondterminal$ that are joined by an edge. Therefore, we can select as terminals $\firstterminal$ and $\secondterminal$ the endpoints of any edge of $G$. 

\noi Step \eqref{algo:dmt-strings}.  Determine the three dmt-strings $(N^1,\firstterminal,\secondterminal), (N^2,\firstterminal,\secondterminal), (N^3,\firstterminal,\secondterminal)$ for the \hbox{graph $G$}, by splitting both of the vertices $\firstterminal$ and $\secondterminal$ of the dipole $D_3$ into three vertices, each an endpoint of one of the edges incident on the split vertex. 

\noi Step \eqref{algo:pgd-dmt}.    It simplifies this calculation if we define a small modification of the parallel operation $\odot_p$.  When we combine two dmt-strings with a parallel operation, we obtain two 2-valent vertices.  Our modified operation $\ov\odot_p$ attaches a spike at each of these 2-valent vertices, as illustrated in Figure \ref{fig:odotp}, so that we once again have a dmt-string.   

\begin{figure} [ht]  
\centering 
    \includegraphics[width=3.5in]{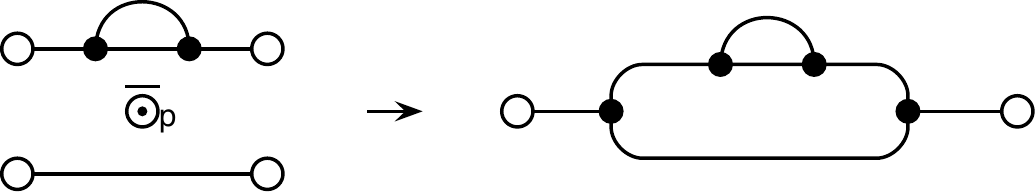}  \vskip-6pt
\caption{The modified parallel operation $\ov\odot_p$.}
\label{fig:odotp}
\end{figure}  

\noi Similarly, our modified operation $\ov\odot_s$ merges the second terminal of the first graph with the first terminal of the second graph, and then smooths away the merged vertex,  as illustrated in Figure \ref{fig:odots}.    

\eject

\begin{figure} [ht]  
\centering 
    \includegraphics[width=3.2in]{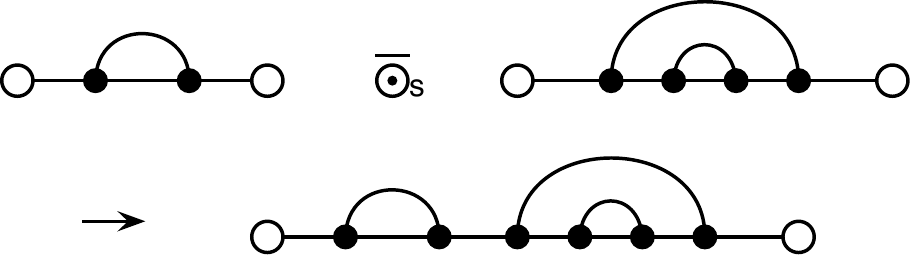}  \vskip-6pt
\caption{The modified parallel operation $\ov\odot_s$.}
\label{fig:odots}
\end{figure}  

\noi The method for calculating the pgd of a dmt-string is given in \hbox{Section \ref{sec:dmt}}.

\enlargethispage{12pt}

\noi Step \eqref{algo:pgd-2dmt}.    This step concerns the composition of an amalgamation of two double-rooted graphs with univalent roots at one pair of univalent roots, followed by a self-amalgamation at the other two roots. See  \hbox{Section \ref{sec:3to1}} for a description of this computation.  

\noi Step \eqref{algo:pgd-3dmt}.   This step is the composition of an amalgamation of two double-rooted graphs with univalent roots at one pair of univalent roots, followed by a self-amalgamation at the other two roots.  See  \hbox{Section \ref{sec:3to1}} for a description of this computation.  

\medskip


\section{\large{PGD of DMT-Strings} \label{sec:dmt}}  

Four parallel productions and four series productions will be sufficient to calculate the values of the partials of any dmt-string.  The following two sets of four productions each are sufficient to calculate all the partials of a dmt-string.  The parallel productions are derived with the aid of Figures \ref{fig:uu*uu}, \ref{fig:uu*uu'}, and \ref{fig:uu'*uu'}.  The series productions are self-evident.  The genus of each resultant embedding is calculated from its Euler characteristic. 

\begin{eqnarray}
\noalign{\vskip-12pt}
\qquad uu_i^\bu\,\ov\odot_p\, uu_j^\bu &\lra& 4uu_{i+j+1}^\bu  \quad\hbox{(Figure \ref{fig:uu*uu})}\label{prod:uu*uu}  \\ 
uu_i^\bu\,\ov\odot_p\, uu_j'  &\lra&  4uu_{i+j+1}' \quad\hbox{(Figure \ref{fig:uu*uu'})} \label{prod:uu*uu'}  \\ 
uu_i'\,\ov\odot_p\, uu_j^\bu &\lra&  4uu_{i+j+1}' \quad\hbox{(mirror of Figure \ref{fig:uu*uu'})}\label{prod:uu'*uu}  \\ 
uu_i'\,\ov\odot_p\, uu_j'  &\lra&  2uu_{i+j}^\bu+ 2uu_{i+j}' \quad\hbox{(Figure \ref{fig:uu'*uu'})} \label{prod:uu'*uu'}  
\end{eqnarray}

\begin{figure} [h!]  
\centering  \vsb
    \includegraphics[width=4.5in]{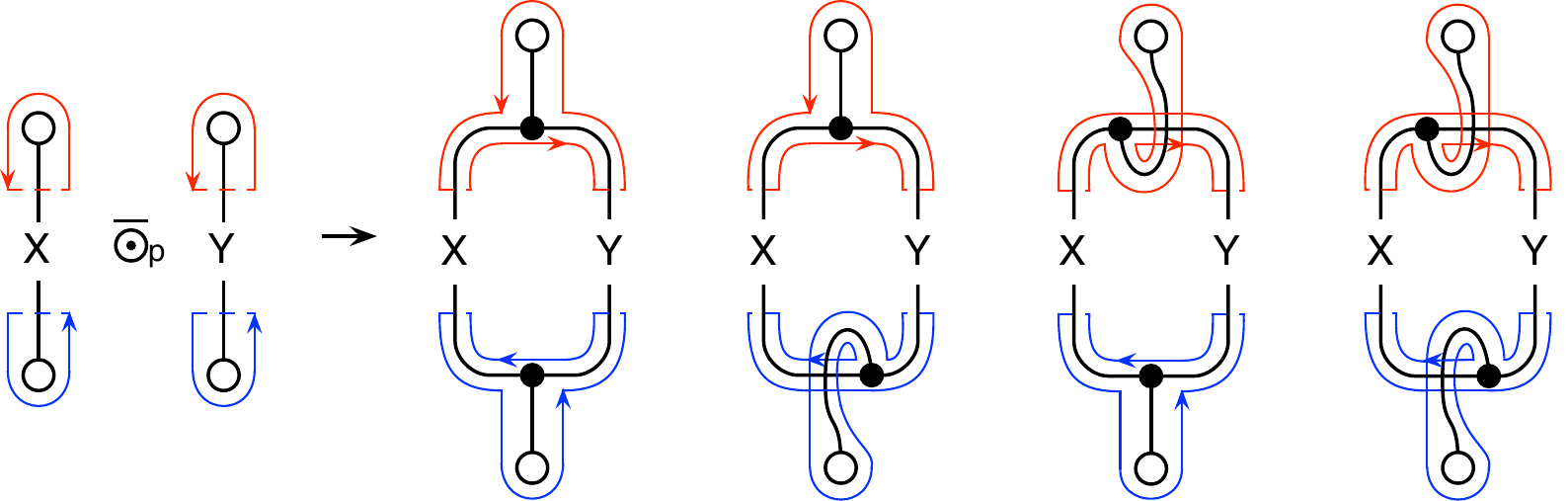}  \vskip-6pt
\caption{Production: $uu_i^\bu\,\ov\odot_p\, uu_j^\bu \lra 4uu_{i+j+1}^\bu$.}
\label{fig:uu*uu}
\end{figure}  

\eject

\begin{figure} [h!]  
\centering 
    \includegraphics[width=4.7in]{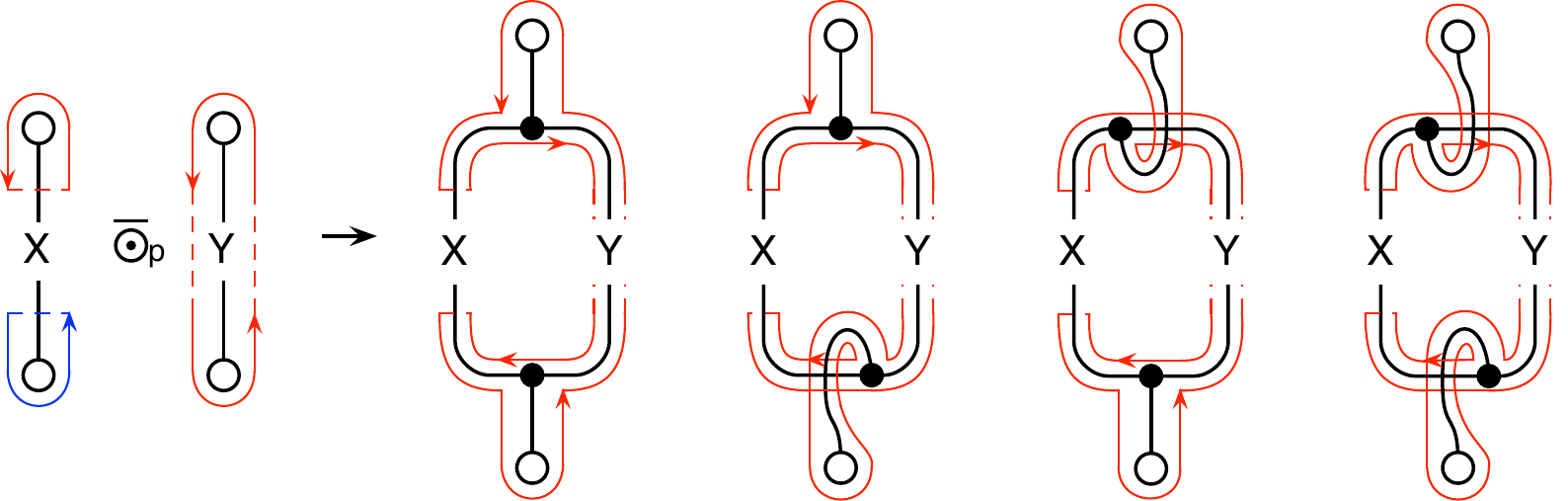}  \vskip-6pt
\caption{Production: $uu_i^\bu\,\ov\odot_p\, uu_j' \lra 4uu_{i+j+1}'$.}
\label{fig:uu*uu'}
\end{figure}  \vskip-12pt

\begin{figure} [h!]  
\centering 
    \includegraphics[width=4.7in]{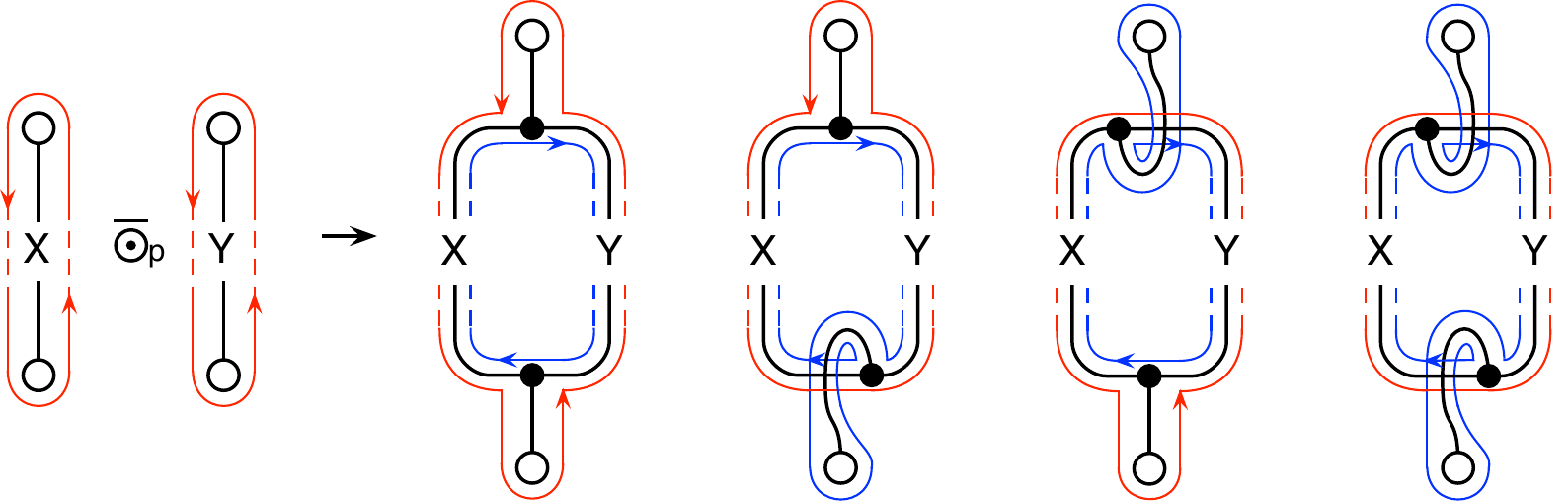}  \vskip-6pt
\caption{Production: $uu_i'\,\ov\odot_p\, uu_j' \lra 2uu_{i+j}' + 2uu_{i+j}^\bu$.}
\label{fig:uu'*uu'}
\end{figure}  
\smallskip

\begin{eqnarray} 
\noalign{\vskip-18pt}
uu_i^\bu\,\ov\odot_s\, uu_j^\bu &\lra& uu_{i+j}^\bu \label{prod:uu-suu}  \\
uu_i^\bu\,\ov\odot_s\, uu_j'  &\lra&  uu_{i+j}^\bu \label{prod:uu-suu'}  \\
uu_i'\,\ov\odot_s\, uu_j^\bu &\lra&  uu_{i+j}^\bu \label{prod:uu'-suu}   \\
uu_i'\,\ov\odot_s\, uu_j'  &\lra&  uu_{i+j}'  \label{prod:uu'-suu'}  
\end{eqnarray}

\medskip

\subhead{Examples} 

Our first example is the dmt-string $\hat D_2$ of Figure \ref{fig:hat-D2}, which is of fundamental use in our further calculations.  Here we use $\pgd$ (which stands for \textit{partitioned genus distribution}) as a function.
\smallskip

\begin{figure} [h!]  
\centering 
    \includegraphics[width=1.5in]{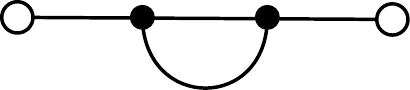}  \vskip-6pt
\caption{The dmt-string $\hat D_2$.}
\label{fig:hat-D2}
\end{figure}  

\noi We observe that $\hat D_2$ is representable as $K_2\,\ov\odot_p\, K_2$. Therefore,
\begin{eqnarray} 
\pgd (\hat D_2) &=&  \pgd (K_2 \,\ov\odot_p\, K_2) \notag \\
&=& uu_0'\,\ov\odot_p\,uu_0' \notag\\
\pgd (\hat D_2) &=& 2uu_0^\bu+2uu_0'  \qquad\hbox{by Prod.\,\eqref{prod:uu'*uu'}} \label{pgd:hat-D2}
\end{eqnarray}

\eject 

We now apply Eq.\,\eqref{prod:uu'*uu'} and the productions above to calculate the partials of the three dmt-strings of the graph of Figure \ref{fig:runex}.

\begin{figure} [h!]  
\centering 
    \includegraphics[width=3in]{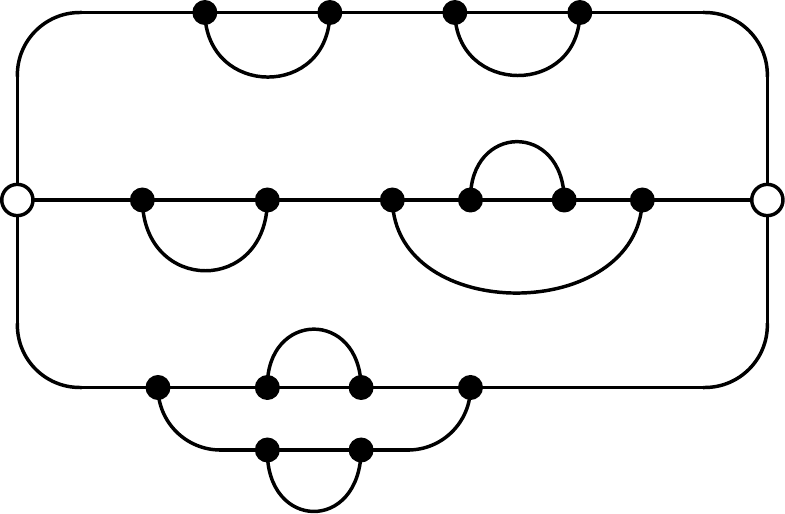}  \vskip-6pt
\caption{A cubic series-parallel graph.}
\label{fig:runex}
\end{figure}  

\noi The top dmt-string $N^1$ is representable as  $\hat D_2\,\ov\odot_s\, \hat D_2$. Therefore, 
\begin{eqnarray}   
\pgd (N^1) &=& \pgd(\hat D_2\,\ov\odot_s\, \hat D_2) \notag \\
&=& (2uu_0^\bu+2uu_0')\,\ov\odot_s\,(2uu_0^\bu+2uu_0') \notag\\
&=& 2uu_0^\bu\,\ov\odot_s\,2uu_0^\bu  + 2uu_0^\bu\,\ov\odot_s\,2uu_0'   \notag \\ 
&& \quad\qquad +\; 2uu_0'\,\ov\odot_s\,2uu_0^\bu  +   2uu_0'\,\ov\odot_s\,2uu_0'   \notag\\
&=& 4uu_0^\bu + 4uu_0^\bu +\; 4uu_0^\bu + 4uu_0' \notag\\
&& \qquad \hbox{by Prods.\,\eqref{prod:uu-suu}, \eqref{prod:uu-suu'}, \eqref{prod:uu'-suu}, \eqref{prod:uu'-suu'}}\notag \\
\pgd (N^1) &=& 12uu_0^\bu+4uu_0'  \label{pgd:N1}
\end{eqnarray}

The dmt-string $N^2$ is representable as $\hat D_2 \,\ov\odot_s\,(\hat D_2 \,\ov\odot_p\, K_2)$.  Therefore,  
\begin{eqnarray}  
\qquad \pgd (N^2)  &=&\pgd(\hat D_2 \,\ov\odot_s\,(\hat D_2 \,\ov\odot_p\, K_2)) \notag \\
&=& (2uu_0^\bu+2uu_0')\,\ov\odot_s\,((2uu_0^\bu+2uu_0')\,\ov\odot_p\, uu_0') \notag \\
&=& (2uu_0^\bu+2uu_0')\,\ov\odot_s\,(8uu_1' + 4uu_0^\bu + 4uu_0') \notag \\
&&\qquad \hbox{by Prods.\,\eqref{prod:uu*uu'}, \eqref{prod:uu'*uu'}}\notag\\
&=& 16uu_1^\bu + 8uu_0^\bu + 8uu_0^\bu + 16uu_1' + 8uu_0^\bu + 8uu_0' \notag \\
&& \qquad \hbox{by Prods.\,\eqref{prod:uu*uu}, \eqref{prod:uu*uu'}, \eqref{prod:uu'*uu}, \eqref{prod:uu'*uu'}} \notag \\
\pgd (N^2) &=&   24uu_0^\bu+8uu_0'+16uu_1^\bu+16uu_1' \label{pgd:N2}
\end{eqnarray}

The dmt-string $N^3$ is representable as $\hat D_2 \,\ov\odot_p\,\hat D_2$.  Therefore,\begin{eqnarray}  
\pgd (N^3) &=& \pgd(\hat D_2 \,\ov\odot_p\, \hat D_2) \notag \\
&=&  (2uu_0^\bu+2uu_0')\,\ov\odot_p\, (2uu_0^\bu+2uu_0') \notag \\
&=& 16uu_1^\bu + 16uu_1' + 16 uu_1' + 8uu_0^\bu +8uu_0' \notag \\
&& \qquad \hbox{by Prods.\,\eqref{prod:uu*uu}, \eqref{prod:uu*uu'}, \eqref{prod:uu'*uu}, \eqref{prod:uu'*uu'}} \notag \\
\pgd (N^3) &=& 8uu_0^\bu+8uu_0'+16uu_1^\bu+32uu_1' \label{pgd:N3}
\end{eqnarray}


\medskip
\section{\large{Amalgamating DMT-Strings} \label{sec:3to1}}  

A parallel operation on two dmt-strings yields a series-parallel graph whose source and target roots are both 2-valent.  Figure \ref{fig:uu*uu->dd} illustrates the productions used to  transform the partials of the two antecedent dmt-strings into the partials of the consequent series-parallel graph.  We define two partials for the case where $\firstterminal$ and $\secondterminal$ are both bivalent and a single walk is twice incident at each:
\begin{eqnarray*}
ss_i^\bu(G,\firstterminal,\secondterminal) &=&  \text{the number of embeddings } G\to S_i \text{ such that} \\ [-3pt]
&& \hbox{the fb-walks twice incident at $\firstterminal$ and $\secondterminal$ are different}; \\
ss_i'(G,\firstterminal,\secondterminal) &=& \hbox{the number of embeddings $G\to S_i$ such that} \\ [-3pt]
&& \hbox{the same fb-walk is twice incident at $\firstterminal$ and $\secondterminal$}. \\[-9pt]
\noalign{\noindent We also define \vskip3pt}  
dd_i''(G,\firstterminal,\secondterminal) &=& \hbox{the number of embeddings $G\to S_i$ such that} \\ [-3pt]
&& \hbox{the same two fb-walks are twice incident at $\firstterminal$ and $\secondterminal$}. 
\end{eqnarray*}

\begin{figure} [h!]  
\centering 
    \includegraphics[width=4in]{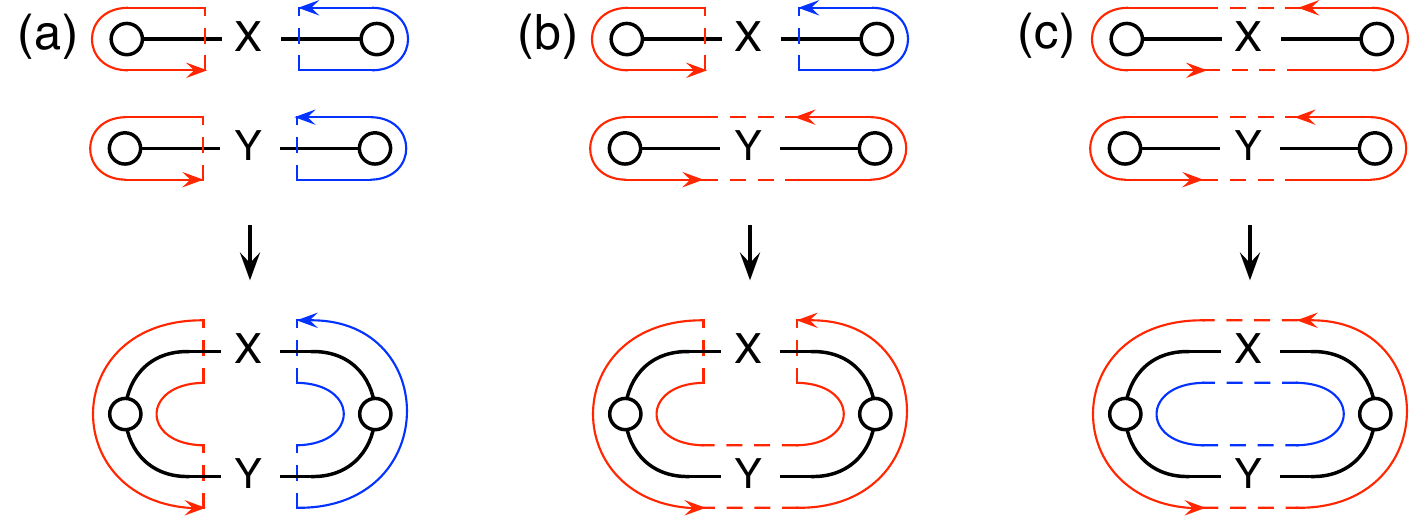}  
\caption{Parallel operations on two dmt-strings.}
\label{fig:uu*uu->dd}
\end{figure}  
\medskip

\begin{eqnarray} 
\noalign{\vskip-24pt}
uu_i^\bu \odot_p uu_j^\bu &\lra& ss_{i+j+1}^\bu \label{prod:uupuu} \qquad\hbox{(Figure \ref{fig:uu*uu->dd}(a))}   \\
uu_i^\bu \odot_p uu_j' &\lra& ss_{i+j+1}'\label{prod:uupuu'} \qquad\hbox{(Figure \ref{fig:uu*uu->dd}(b))}   \\
uu_i' \odot_p uu_j^\bu &\lra& ss_{i+j+1}'\label{prod:uu'puu} \qquad\hbox{(use Figure \ref{fig:uu*uu->dd}(b))}    \\
uu_i' \odot_p uu_j' &\lra& dd_{i+j}'' \label{prod:uu'puu'} \quad\qquad\hbox{(Figure \ref{fig:uu*uu->dd}(c))}     
\end{eqnarray}

Once again, the genus of each resultant embedding is calculated from its  Euler characteristic.  Continuing with our example of the graph from Figure 5.5, these productions enable us, in turn, to calculate \hbox{$\pgd(N^1\odot_p N^2)$}.

\begin{eqnarray}
\noalign{\vskip-12pt}
\qquad \pgd(N^1\odot_p N^2) &\!\!=\!\!& (12uu_0^\bu+4uu_0') \notag \\
&&\quad \odot_p (24uu_0^\bu+8uu_0'+16uu_1^\bu+16uu_1') \notag \\
&\!\!=\!\!& 12uu_0^\bu\odot_p (24uu_0^\bu+8uu_0'+16uu_1^\bu+16uu_1') \notag \\
&&\quad+~ 4uu_0'\odot_p (24uu_0^\bu+8uu_0'+16uu_1^\bu+16uu_1') \notag \\
&\!\!=\!\!& 288ss_1^{\bu}+ 96ss_1'+192ss_2^\bu+192ss_2' \notag \\
&&\quad +96ss_1'+32dd_0''+64ss_2'+64dd_1'' \notag \\
\;\; \pgd(N^1\odot_p N^2) &\!\!=\!\!& 32dd_0'' + 64dd_1'' + 288ss_1^\bu +192ss_2^\bu \label{pgd:N1-N2}\\
&&\quad +192ss_1'  +256ss_2' \notag 
\end{eqnarray}
\smallskip

The following six productions will enable us to complete our calculation of the genus distribution of the graph of Figure \ref{fig:runex}, starting from the pgd \eqref{pgd:N1-N2} for $N^1\odot_p N^2$ and the pgd \eqref{pgd:N3} for $N^3$.   We observe that the result of applying these productions is a genus distribution, rather than a partitioned genus distribution.  Accordingly, the consequents are of the form $g_i$ rather than subscripted partials, thereby indicating only that the resulting embedding surface is $S_i$.   
\smallskip

\begin{eqnarray} 
\noalign{\vskip-12pt}
dd_i'' \odot_p uu_j^\bu &\lra& 4g_{i+j+1} \label{prod:dd''0uu} \qquad\qquad\hbox{(Figure \ref{fig:dd''0uu})} \label{prod:dd''uu} \\
dd_i'' \odot_p uu_j' &\lra& 2g_{i+j} + 2g_{i+j+1} \label{prod:dd''0uu'} \quad\hbox{(Figure \ref{fig:dd''0uu'})} \\
ss_i^\bu \odot_p uu_j^\bu &\lra& 4g_{i+j+1} \label{prod:ss0uu} \qquad\qquad\hbox{(Figure \ref{fig:ss0uu})} \\
ss_i^\bu \odot_p uu_j' &\lra& 4g_{i+j+1} \label{prod:ss0uu'} \qquad\qquad\hbox{(Figure \ref{fig:ss0uu'})} \\
ss_i'\! \odot_p uu_j^\bu &\lra& 4g_{i+j+1} \label{prod:ss'0uu} \qquad\qquad\hbox{(Figure \ref{fig:ss'0uu})} \\
ss_i'\! \odot_p uu_j' &\lra& 4g_{i+j} \label{prod:ss'0uu'} \quad\qquad\qquad\hbox{(Figure \ref{fig:ss'0uu'})}  \label{prod:ss1uu'}  
\end{eqnarray}
\smallskip

Figures \ref{fig:dd''0uu}, \ref{fig:dd''0uu'}, \ref{fig:ss0uu}, \ref{fig:ss0uu'}, \ref{fig:ss'0uu}, and \ref{fig:ss'0uu'} illustrate the six productions \eqref{prod:dd''uu}, \eqref{prod:dd''0uu'}, \eqref{prod:ss0uu}, \eqref{prod:ss0uu'}, \eqref{prod:ss'0uu}, and \eqref{prod:ss'0uu'}, respectively, that collectively provide an algebraic representation of the parallel operation. 


\clearpage 
  
\begin{figure} [h!]  
\centering 
    \includegraphics[width=4.6in]{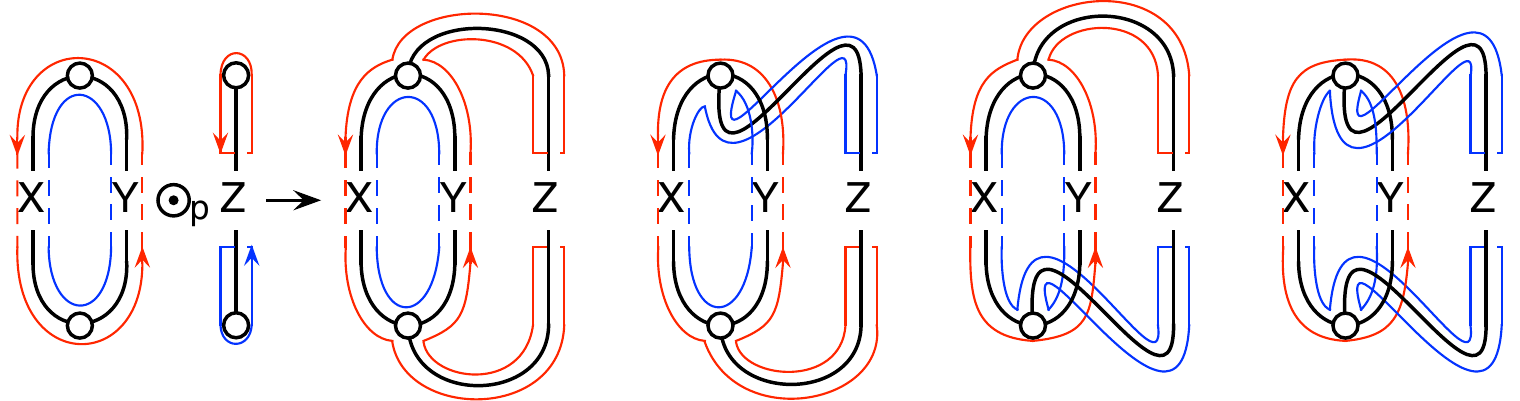}  
\caption{$dd_i''\odot_p uu_j^\bu \lra 4g_{i+j+1}$.}
  \vskip6pt 
\label{fig:dd''0uu}
    \includegraphics[width=4.5in]{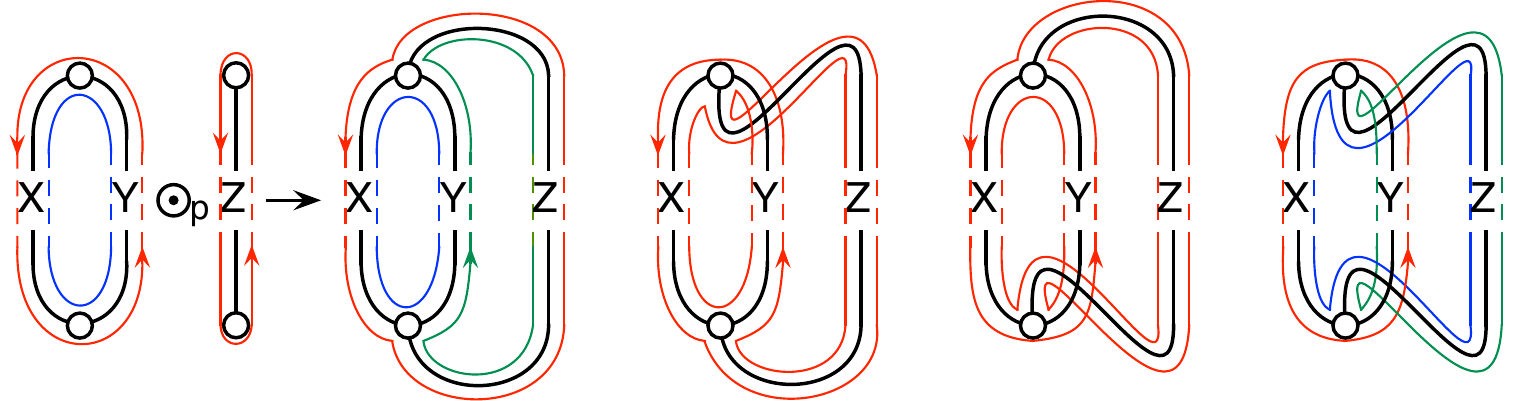}  \vskip-6pt
\caption{$dd_i''\odot_p uu_j' \lra 2g_{i+j}+2g_{i+j+1}$.}
  \vskip6pt 
\label{fig:dd''0uu'}
    \includegraphics[width=4.5in]{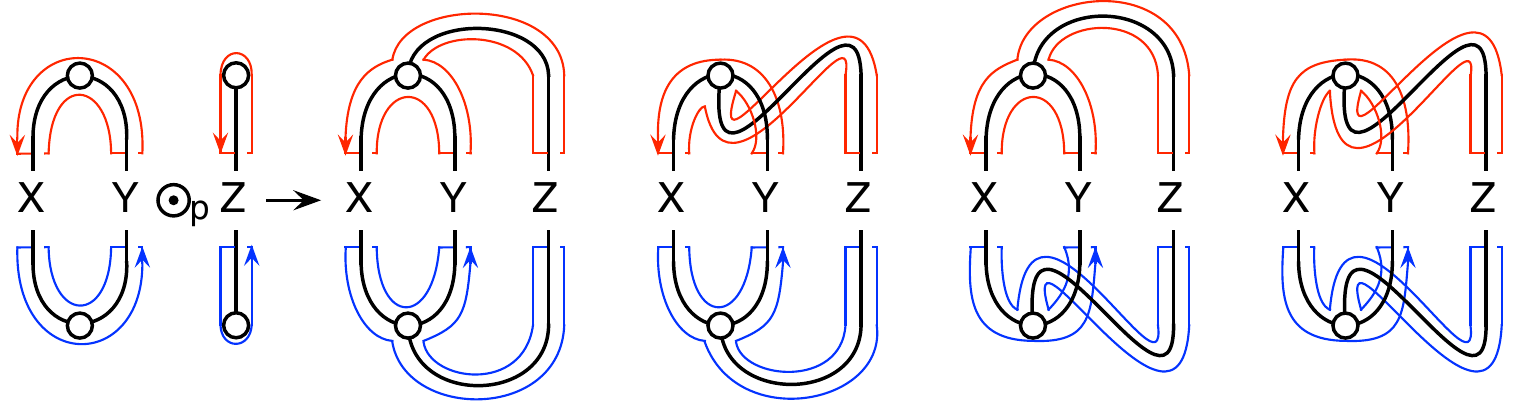}  \vskip-6pt
\caption{$ss_i^\bu\odot_p uu_j^\bu \lra 4g_{i+j+1}$.}
  \vskip6pt 
\label{fig:ss0uu}
    \includegraphics[width=4.5in]{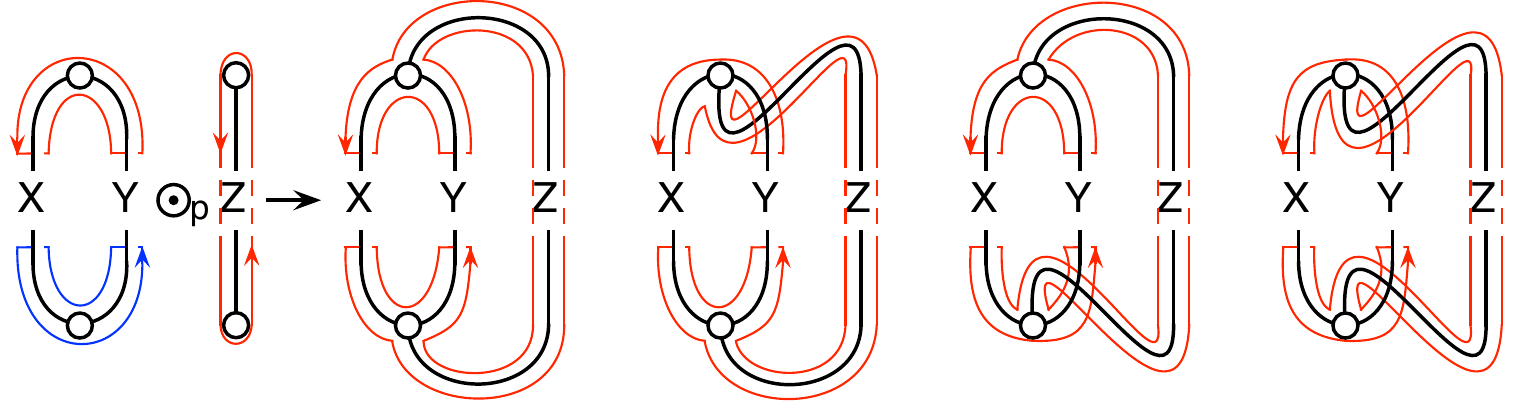}  \vskip-6pt
\caption{$ss_i^\bu\odot_p uu_j' \lra 4g_{i+j+1}$.}
  \vskip6pt 
\label{fig:ss0uu'}
    \includegraphics[width=4.5in]{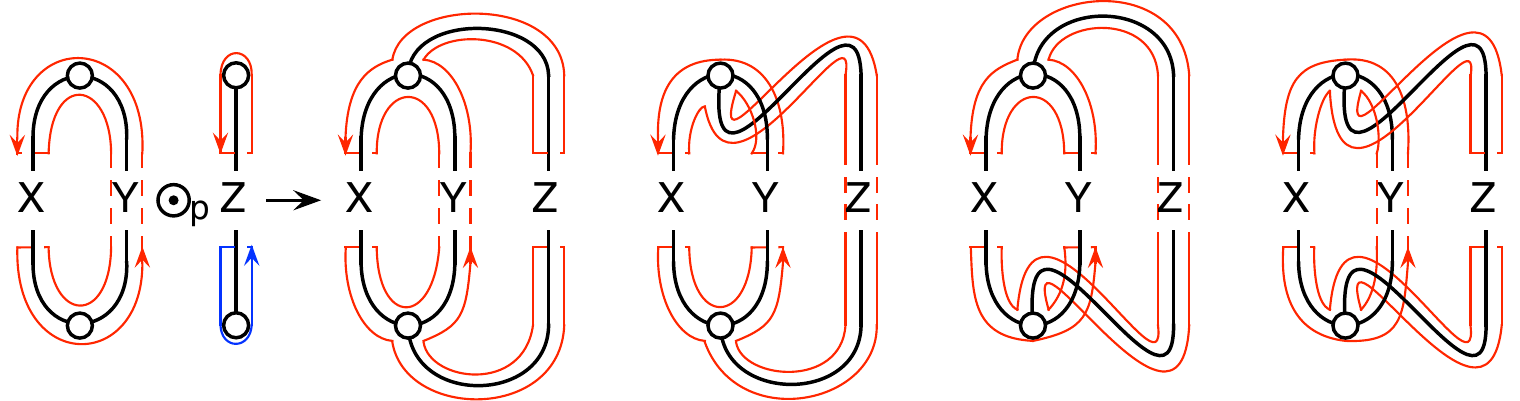}  \vskip-6pt
\caption{$ss_i'\odot_p uu_j \lra 4g_{i+j+1}$.}
\label{fig:ss'0uu}
\vskip-12pt
\end{figure}

\newpage

\begin{figure} [h!]  
\centering 
    \includegraphics[width=4.5in]{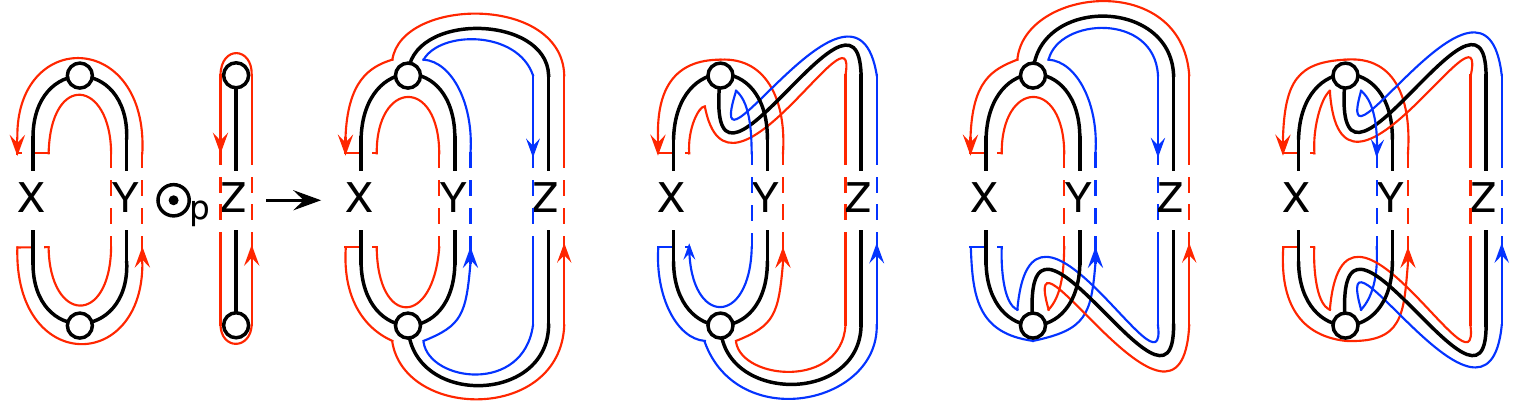}  \vskip-6pt
\caption{$ss_i'\odot_p uu_j' \lra 4g_{i+j}$.}
\label{fig:ss'0uu'}
\end{figure}  \vskip-12pt 

\enlargethispage{12pt}

\begin{eqnarray*}
\noalign{\vskip-18pt}
\pgd(N^1\odot_p N^2) &\!=\!& 32dd_0'' + 64dd_1'' + 288ss_1^\bu +192ss_2^\bu  +192ss_1'  +256ss_2'  \\
\pgd (N^3) &=& 8uu_0^\bu+8uu_0'+16uu_1^\bu+32uu_1' 
\end{eqnarray*}

{\allowdisplaybreaks
\begin{eqnarray*}  
\noalign{\vskip-12pt}
32dd_0''\odot_p 8uu_0^\bu &=& 1024g_1 \\
32dd_0''\odot_p 8uu_0' &=& 512g_0+512g_1 \\
32dd_0''\odot_p 16uu_1^\bu &=& 2048g_2 \\
32dd_0''\odot_p 32uu_1' &=& 2048g_1 + 2048g_2\\
64dd_1''\odot_p 8uu_0^\bu &=& 2048g_2 \\
64dd_1''\odot_p 8uu_0' &=& 1024g_1+1024g_2 \\
64dd_1''\odot_p 16uu_1^\bu &=& 4096g_3 \\
64dd_1''\odot_p 32uu_1' &=& 4096g_2 + 4096g_3\\
288ss_1^\bu\odot_p 8uu_0^\bu &=& 9216g_2 \\
288ss_1^\bu\odot_p 8uu_0' &=& 9216g_2 \\
288ss_1^\bu\odot_p 16uu_1^\bu &=& 18432g_3 \\
288ss_1^\bu\odot_p 32uu_1' &=& 36864g_3\\
192ss_2^\bu\odot_p 8uu_0^\bu &=& 6144g_3 \\
192ss_2^\bu\odot_p 8uu_0' &=& 6144g_3 \\
192ss_2^\bu\odot_p 16uu_1^\bu &=& 12288g_4 \\
192ss_2^\bu\odot_p 32uu_1' &=& 24576g_4\\
192ss_1'\odot_p 8uu_0^\bu &=& 6144g_2 \\
192ss_1'\odot_p 8uu_0' &=& 6144g_1 \\
192ss_1'\odot_p 16uu_1^\bu &=& 12288g_3 \\
192ss_1'\odot_p 32uu_1' &=& 24576g_2\\
256ss_2'\odot_p 8uu_0^\bu &=& 8192g_3 \\
256ss_2'\odot_p 8uu_0' &=& 8192g_2 \\
256ss_2'\odot_p 16uu_1^\bu &=& 16384g_4 \\
256ss_2'\odot_p 32uu_1' &=& 32768g_3 \\
\end{eqnarray*}
}

By summation of the right-hand sides of the equations above, we obtain the genus distribution
$$\begin{array}{c|rrrrr}
{\rm genus} & 0\hfil & 1\hfil & 2\hfil & 3\hfil & 4\hfil \\
\hline
{\rm \# embeddings} & 512 & 10752 & 68608 & 129024 & 53248 
\end{array} $$

This calculation has been confirmed by a computer program based on the Heffter-Edmonds algorithm. 
\smallskip

\begin{thm}
The time required by Algorithm \ref{algo:biconn} is at most quadratic in the number of vertices of the graph $G$ supplied as input.
\end{thm}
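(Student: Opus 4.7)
The plan is to bound, one phase at a time, the cost of each of the five steps of Algorithm \ref{algo:biconn} in terms of the number $n$ of vertices of $G$. The key preliminary observation I would establish first is that every pgd arising in the computation is a vector of $O(n)$ integers. Indeed, because $G$ is $3$-regular it has $3n/2$ edges, and every subgraph $H$ that appears during the recursive construction has at most these many edges and $n$ vertices, so by Euler's formula every cellular embedding of $H$ has genus at most $\lfloor(|E_H|-|V_H|+1)/2\rfloor = O(n)$. Since only constantly many partial types are defined in Sections \ref{sec:prods} and \ref{sec:3to1}, the total length of any pgd that appears is $O(n)$.

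Steps \eqref{algo:Epp} and \eqref{algo:dmt-strings} are then easy: picking an edge and splitting the two endpoints of $D_3$ into the three dmt-strings $N^1,N^2,N^3$ requires one pass through $G$, for a total of $O(n)$ time. For Step \eqref{algo:pgd-dmt}, I would observe that each $N^k$ can be built from $K_2$ by iteratively applying the modified series or parallel operation with a constant-size piece ($K_2$ or $\hat D_2$). Invoking the productions \eqref{prod:uu*uu}--\eqref{prod:uu'-suu'}, combining a running pgd of support size $O(n)$ with a constant-size pgd costs $O(n)$ arithmetic operations, because each of the $O(n)$ partials on the left spawns a constant-size contribution on the right. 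Since $|V_{N^k}|\le n$ and $N^k$ is built in at most $n$ such elementary steps, its pgd is obtained in $O(n^2)$ time; summed over $k=1,2,3$ this is still $O(n^2)$.

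For Step \eqref{algo:pgd-2dmt} I would apply the productions \eqref{prod:uupuu}--\eqref{prod:uu'puu'} to the pgds of $N^1$ and $N^2$. This is a ``convolution'' over pairs of partials of support sizes $O(n) \times O(n)$, each pair contributing in constant time; hence the cost is $O(n^2)$. The same analysis applies to Step \eqref{algo:pgd-3dmt} using the productions \eqref{prod:dd''uu}--\eqref{prod:ss1uu'}: combining the pgd of $N^1\odot_p N^2$ with the pgd of $N^3$ once more takes $O(n^2)$ arithmetic operations and yields the genus distribution of $G$. Summing the costs of all five steps gives the claimed $O(n^2)$ bound.

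The only delicate point is the uniform $O(n)$ bound on the support size of every intermediate pgd. I would justify it by noting that each intermediate subgraph $H$ is either a dmt-string or a parallel amalgamation of two dmt-strings obtained from $G$, so $|E_H|\le 3n/2$ and the Euler-formula bound on the genus of any embedding of $H$ remains $O(n)$ throughout. With that in hand, counting partials times genus indices gives the uniform $O(n)$ support length, and the step-by-step bookkeeping above yields the theorem.
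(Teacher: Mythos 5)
Your overall strategy coincides with the paper's: bound the support of every intermediate pgd by $O(n)$ (you via Euler's formula, the paper via the maximum-genus bound $\gamma_{\max}\le\beta(G)$ --- the same estimate), then charge each pairwise combination of nonzero partials a constant, so that Steps \eqref{algo:pgd-2dmt} and \eqref{algo:pgd-3dmt} cost $O(n)\times O(n)$. Your treatment of Steps \eqref{algo:Epp}, \eqref{algo:dmt-strings}, \eqref{algo:pgd-2dmt}, and \eqref{algo:pgd-3dmt} is essentially the paper's argument.

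There is, however, a genuine gap in your analysis of Step \eqref{algo:pgd-dmt}. You assert that each dmt-string $N^k$ ``can be built from $K_2$ by iteratively applying the modified series or parallel operation with a constant-size piece ($K_2$ or $\hat D_2$),'' and you then charge $O(n)$ per step over at most $n$ steps. That premise is false in general: the operation $\ov\odot_p$ combines \emph{two} dmt-strings, and nested dmt-steps can make both operands large. For instance, after doubling the middle third of an edge one obtains two parallel edges between $a$ and $b$, and applying further dmt-steps inside \emph{each} of the two parallel branches grows both branches independently; the resulting decomposition has a parallel node whose two children each contain $\Theta(n)$ vertices. At such a node you must combine two pgds of support $\Theta(n)$, which is a single $\Theta(n^2)$ operation, not an ``attach a constant piece'' step, and your $n\times O(n)$ accounting does not cover it. The conclusion still holds, but it needs the accounting the paper actually uses: if the fragments merged over the whole reassembly have sizes $x_1,\dots,x_p$, each pair of vertices is charged at most once, so the total number of production applications is at most $\sum_{i\ne j}x_ix_j<(x_1+\cdots+x_p)^2$, which is quadratic regardless of how balanced the decomposition tree is. With that replacement your proof is complete.
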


\begin{proof} 
Choosing an edge and its endpoints $s$ and $t$, as required by Step \eqref{algo:Epp}, takes constant time. 
The time needed for Step \eqref{algo:dmt-strings}, which is achieved by partitioning the edge-set of the given graph $G$ into the edge-sets of the  three dmt-strings, is linear in the number of vertices of $G$, using depth-first search.  

When two subgraphs are amalgamated during Step \eqref{algo:pgd-dmt}, the contribution to the partials of the merged graph corresponding to a pair of nonzero-valued subscripted partials,  one in the pgd of the first amalgamand and the other in the pgd of the second amalgamand, is calculated by the application of one of the Productions \eqref{prod:uu*uu}, \ldots, \eqref{prod:uu'-suu'}.  Since the number of nonzero-valued subscripted partials for any graph is linear in the number of vertices, the time to calculate the pgd of the merged graph is proportional to the product of the numbers of vertices in the two amalgamands.  Suppose that the numbers of vertices of the fragments of an dmt-string are $x_1, x_2, \ldots, x_p$.  Since the vertices in two different fragments will be merged into a combined fragment only once during the reassembly of the dmt-string, the number of applications of productions during the entire reassembly is at most 
$$\sum_{i\ne j}x_ix_j$$
However,
$$\sum_{i\ne j}x_ix_j ~<~ (x_1+x_2+\cdots+x_p)^2$$
from which we infer that the total time is at most quadratic in the number of vertices of the dmt-string. 

To see that Step \eqref{algo:pgd-2dmt} can be done in quadratic time, first observe that the number of nonzero partials in $\pgd(N^1)$ and $\pgd(N^2)$ is linear in the maximum genus of $N^1$ and $N^2$, since partials can be nonzero only for genera with genus less than or equal to maximum genus and 
for every  genus in this range there is only a constant number of different types of partials. Furthermore, it is well known that the maximum genus of any graph $G$ is bounded from above by $\beta(G)$ (see \cite{GrTu87}), the cycle rank of $G$, which in turn is linear in the number of vertices for any graph with maximum degree $3$. Accordingly, there is only a linear number of nonzero partials for both $N^1$ and $N^2$, yielding at most a quadratic number of combinations needed to calculate $\pgd(N^1\odot_p N^2)$. Finally, each combination of two partials can be computed in constant time using  appropriate choices from Productions \eqref{prod:uu*uu}, \ldots, \eqref{prod:uu'*uu'}, which implies that Step \eqref{algo:pgd-2dmt} can indeed be done in quadratic time in the number of vertices.

The argument that Step \eqref{algo:pgd-3dmt} uses only quadratic time is similar to that for Step \eqref{algo:pgd-2dmt} --- the number of nonzero partials is again linear in the number of vertices; and each computation, this time involving one of the Productions \eqref{prod:dd''uu}, \ldots, \eqref{prod:ss1uu'}, can be done in constant time.
\end{proof}

\bigskip
\section{\large{Extending to All Graphs of Treewidth $\le2$ and Maximum Degree $\le3$} \label{sec:TW2deg3}}  

The \mdef{bar-amalgamation} of two disjoint rooted graphs $(G,u)$ and $(H,v)$ is the result of running a new edge (the ``bar") between $u$ \hbox{and $v$}.  

\begin{prop}
The genus distribution of the bar-amalgamation of graphs $(G,u)$ and $(H,v)$ is the constant multiple of the convolution of the genus distributions of $G$ and $H$.  The constant factor is the product of the degree of $u$ in $G$ and the degree of $v$ in $H$. 
\end{prop}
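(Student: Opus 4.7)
The plan is to establish a bijection between embeddings of the bar-amalgamation $(G,u)\ast(H,v)$, with bar $e$, and ordered triples consisting of an embedding of $G$, an embedding of $H$, and a choice of insertion position for $e$ at each of $u$ and $v$, then to verify that this bijection is genus-additive.

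First, I would work at the level of rotation systems. At every vertex other than $u$ and $v$, the rotation in the amalgamation is simply a rotation of $G$ or of $H$. At $u$, the rotation is a cyclic ordering of the $\deg_G(u)+1$ edges that includes $e$; deleting $e$ from this cycle leaves a rotation of the $G$-edges at $u$, and conversely each rotation of $G$ at $u$ extends in exactly $\deg_G(u)$ ways, one for each gap in the cyclic order where $e$ may be inserted. The same applies at $v$. Consequently, each pair of embeddings $G\to S_j$ and $H\to S_k$ lifts to precisely $\deg_G(u)\deg_H(v)$ embeddings of $(G,u)\ast(H,v)$, and every embedding of the amalgamation arises from a unique such triple.

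Second, I would verify that every such lift has genus $j+k$. Since $e$ is a bridge, both sides of $e$ appear on a single fb-walk of the amalgamation, obtained by splicing together one fb-walk of the $G$-embedding with one fb-walk of the $H$-embedding (namely the two fb-walks selected by the chosen insertion corners at $u$ and $v$). All other fb-walks of $G$ and of $H$ pass unchanged into the amalgamation. Hence the amalgamated embedding has $f_G+f_H-1$ faces. Since it also has $|V_G|+|V_H|$ vertices and $|E_G|+|E_H|+1$ edges, Euler's formula yields $\chi = \chi_G + \chi_H - 2$, which forces the genus to be $j+k$.

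Combining the two facts and partitioning embeddings of the amalgamation on $S_i$ by the genera of the $G$- and $H$-parts gives
\begin{equation*}
g_i\bigl((G,u)\ast(H,v)\bigr) \;=\; \deg_G(u)\,\deg_H(v)\sum_{j+k=i} g_j(G)\,g_k(H),
\end{equation*}
which is the stated convolution formula with the claimed multiplicative constant. The main obstacle is making precise the fb-walk splicing argument behind the face count $f_{\text{new}} = f_G + f_H - 1$; one has to check that on an arbitrary corner at $u$ the face-tracing algorithm really does pair that corner's fb-walk with the corresponding one at $v$ via the two traversals of $e$, and leaves the remaining fb-walks untouched. Once that is pinned down, the remainder of the argument is routine bookkeeping with rotation systems together with one application of Euler's formula.
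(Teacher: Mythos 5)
Your argument is correct, but it is worth noting that the paper does not prove this proposition at all: it simply cites Theorem 5 of Gross and Furst \cite{GrFu87}, where the result (the bottom rung of their ``hierarchy'' of imbedding-distribution invariants) is established. What you have written is essentially a reconstruction of that original proof: the count $\deg_G(u)\deg_H(v)$ of ways to insert the bar into the two rotations, plus the face count $f_G+f_H-1$ and Euler's formula to get genus additivity. Both halves are sound. On the one point you flag as the ``main obstacle,'' the splicing argument is genuinely routine here because the bar $e$ is the only edge joining the $G$-side to the $H$-side: the fb-walk that traverses $e$ from $u$ to $v$ must wander entirely within $H$ until it returns across $e$, so it necessarily picks up exactly the $H$-face at the chosen corner of $v$ and the $G$-face at the chosen corner of $u$, merges them into one face, and leaves every other fb-walk of the two constituent embeddings untouched. (This is the standard fact that both sides of a bridge lie on a single face; it is where connectivity of $G$ and $H$ is quietly used, since the face-merging count $f_G+f_H-1$ presumes each constituent embedding is cellular.) So your proof is complete in substance and more self-contained than the paper's one-line citation; the only thing it ``costs'' is rederiving a classical lemma rather than quoting it.
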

\begin{proof} \vsb 
This is Theorem 5 of \cite{GrFu87}.
\end{proof}

To extend Algorithm \ref{algo:biconn} to any graph $G$ of treewidth \hbox{at most 2} and maximum degree 3, we infer from Proposition \ref{prop:TW2-SP} that each biconnected component of $G$  either is isomorphic to $K_2$ or is a homeomorphic copy of a 3-regular biconnected graph.  Moreover, each of the latter kind of biconnected components meets only the $K_2$-type components.  Thus, the graph $G$ is an iterated bar-amalgamation of biconnected series-parallel graphs.  Accordingly, to calculate its genus distribution, we calculate the genus distributions of its biconnected components, take corresponding convolutions, and multiply by scalars corresponding to degrees of vertices at the ends of the bars.

\bigskip
\section{\large{Conclusions}}  

Starting with a quadratic-time algorithm for calculating the genus distribution of all cubic biconnected series-parallel graphs, we have constructed a quadratic-time algorithm for the genus distribution of any graph of treewidth at most 2 and maximum degree at most 3.  The advantage of this case-specific algorithm over the more general algorithm of \cite{Gr12} is the ease with the case-specific algorithm can be used to obtain numerical results for specific graphs.

\bigskip
\section*{{Acknowledgements}}   

Thanks to Maria Chudnovsky for suggesting series-parallel graphs as an interesting family of low-treewidth graphs for a genus distribution calculation.  

The second author was partly supported by Nad\'acia Tatra Banky grant  11sds071, SAIA NSP, grant APVV-0223-10,  the grant APVV-ESF-EC-0009-10 within the EUROCORES Programme EUROGIGA (project GReGAS) of the European Science Foundation, and Ministry of Education, Youth, and Sport project No.~CZ.1.07/2.3.00/30.0009 -- Employment of Newly Graduated Doctors of Science for Scientific Excellence.  The research was done while the second author was a Ph.D.~student at the Department of Computer Science, Comenius University, Bratislava, Slovakia, and was 
visiting the Department of Computer Science at Columbia University.  He would like to thank his host Prof. J.~L.~Gross and the department for the hospitality.

\bigskip

\end{document}